\newcommand{\Date}[1]{\def\@Date{#1}}
\def\today{\number\day~\ifcase\month\or
 January\or February\or March\or April\or May\or June\or
 July\or August\or September\or October\or November\or December\fi~\number\year}
\def\be{\begin{equation}}
\def\ee{\end{equation}}
\def\bea{\begin{eqnarray}}
\def\eea{\end{eqnarray}}
\def\bd{\begin{displaymath}}
\def\ed{\end{displaymath}}
\def\bda{\begin{eqnarray*}}
\def\eda{\end{eqnarray*}}
\def\bsm{\begin{small}}
\def\esm{\end{small}}
\def\t0{\theta_0}
\def\ha1{\hat \beta_1}
\def\bnt{\begin{enumerate}}
\def\ent{\end{enumerate}}
\def\T{{ \mathrm{\scriptscriptstyle T} }}
\def\AS{A\"{\i}t-Sahalia}
\def\bsc{\begin{scriptsize}}
\def\esc{\end{scriptsize}}
\newtheorem{theorem}{Theorem}
\newtheorem{lemma}{Lemma}
\newtheorem{cy}{Corollary}
\theoremstyle{definition}
\newtheorem{remark}{Remark}
\newcommand\independent{\protect\mathpalette{\protect\independenT}{\perp}}
\def\independenT#1#2{\mathrel{\rlap{$#1#2$}\mkern2mu{#1#2}}}
\newcommand{\figcaption}{\def\@captype{figure}\caption}
\newcommand{\tabcaption}{\def\@captype{table}\caption}
\newcommand{\cov}{{\rm Cov}}
\newcommand{\diag}{{\rm diag}}
\newcommand{\ie}{\mbox{\sl i.e.\;}}
\newcommand{\tr}{\mbox{tr}}
\newcommand{\var}{\mbox{Var}}
\newcommand{\bA}{{\mathbf A}}
\newcommand{\bB}{{\mathbf B}}
\newcommand{\bI}{{\mathbf I}}
\newcommand{\bM}{{\mathbf M}}
\newcommand{\bQ}{{\mathbf Q}}
\newcommand{\bP}{{\mathbf P}}
\newcommand{\bU}{{\mathbf U}}
\newcommand{\bV}{{\mathbf V}}
\newcommand{\bW}{{\mathbf W}}
\newcommand{\bX}{{\mathbf X}}
\newcommand{\bY}{{\mathbf Y}}
\newcommand{\bZ}{{\mathbf Z}}
\newcommand{\ba}{{\mathbf a}}
\newcommand{\bb}{{\mathbf b}}
\newcommand{\bg}{{\mathbf g}}
\newcommand{\bm}{{\mathbf m}}
\newcommand{\balpha} {\boldsymbol{\alpha}}
\newcommand{\bbeta}  {\boldsymbol{\beta}}
\newcommand{\bfeta}  {\boldsymbol{\eta}}
\newcommand{\bepsilonb}{\boldsymbol{\varepsilon}}
\newcommand{\bepsilonbb}{\boldsymbol{\epsilon}}
\newcommand{\bOmega}{\boldsymbol{\Omega}}
\newcommand{\bomega}{\boldsymbol{\omega}}
\newcommand{\bSigma}{\boldsymbol{\Sigma}}
\newcommand{\bgamma}{\boldsymbol{\gamma}}
\newcommand{\bmu} {\boldsymbol{\mu}}
\newcommand{\bzeta} {\boldsymbol{\zeta}}
\newcommand{\bLambda} {\boldsymbol{\Lambda}}
\def\JRSSB{{\sl Journal of the Royal Statistical Society}, {\bf B}}
\def\BKA{{\sl Biometrika}}
\def\JASA{{\sl Journal of the American Statistical Association}}
\def\AS{{\sl The Annals of Statistics}}
\def\AP{{\sl The Annals of Probability}}
\begin{document}

\title{ \bf High Dimensional Elliptical Sliced Inverse Regression in non-Gaussian Distributions}

%\author{xx\\xxxxx \and xx\\xxxxx }
\author{Jia Zhang\\Department of Statistics, Southwestern University of Finance and Economics\\Chengdu {\rm 611130}, China; \\jeanzhang9@2015.swufe.edu.cn \and Xin Chen\footnote{Corresponding author.}\\Department of Statistics and Applied Probability, National University of Singapore\\Singapore {\rm 117546}, Singapore; \\stacx@nus.edu.sg \and Wang Zhou\\Department of Statistics and Applied Probability, National University of Singapore\\Singapore {\rm 117546}, Singapore; \\stazw@nus.edu.sg }
%\date{\large The University of Melbourne, Temple University, and University of Rochester}
%\author[1]{Jia Zhang}{}
%\author[2]{Xin Chen}{Corresponding author}

%\address[{\rm1}]{Department of Statistics, Southwestern University of Finance and Economics, Chengdu {\rm 611130}, China;}
%\address[{\rm2}]{Department of Statistics and Applied Probability, National University of Singapore, Singapore {\rm 117546}, Singapore;}

%\Emails{jeanzhang9@2015.swufe.edu.cn, stacx@nus.edu.sg}\maketitle

%\date{}
\maketitle

\begin{abstract}
Sliced inverse regression (SIR) is the most widely-used sufficient dimension reduction method due to its simplicity, generality and computational efficiency. However, when the distribution of the covariates deviates from the multivariate normal distribution, the estimation efficiency of SIR is rather low. In this paper, we propose a robust alternative to SIR - called elliptical sliced inverse regression (ESIR) for analysing high dimensional, elliptically distributed data. There are wide range of applications of the elliptically distributed data, especially in finance and economics where the distribution of the data is often heavy-tailed. To tackle the heavy-tailed elliptically distributed covariates, we novelly utilize the multivariate Kendall's tau matrix in a framework of so-called generalized eigenvector problem for sufficient dimension reduction. Methodologically, we present a practical algorithm for our method. Theoretically, we investigate the asymptotic behavior of the ESIR estimator and obtain the corresponding convergence rate under high dimensional setting. Quantities of simulation results show that ESIR significantly improves the estimation efficiency in heavy-tailed scenarios. A stock exchange data analysis also demonstrates the effectiveness of our method. Moreover, ESIR can be easily extended to most other sufficient dimension reduction methods.

\end{abstract}

\begin{quote}
\noindent
{\sl Keywords}: Multivariate Kendall's tau; Sliced inverse regression; Elliptical distribution; Convergence rate; Central subspace; Sufficient dimension reduction. \end{quote}

\begin{quote}
\noindent
{\sl MSC2010 subject classifications}: Primary 62H86; secondary 62G20
\end{quote}

\thispagestyle{empty}
\pagenumbering{gobble}

\newpage
\pagenumbering{arabic}

\setcounter{page}{1}

\section{Introduction}
\label{s1}

In the regression model, let $Y \in \mathbb{R}$ denote the response variable and $\bX \in \mathbb{R}^p$ denote the covariates. If there exist orthogonal $1 \times p$ vectors $\bbeta_1, \dots, \bbeta_K$ with unit norm such that
\be \label{1.1}
Y \independent \bX | (\bbeta_1 \bX, \dots, \bbeta_K \bX), \quad (K \le p)
\ee
where $\independent$ represents independence, the column space of the $p \times K$ matrix $\bB = (\bbeta_1^{\T}, \dots, \bbeta_K^{\T})$ is defined as a dimension reduction subspace by \cite{Cook1994} and \cite{Cook1998}. The intersection of all the dimension reduction subspaces is still a dimension reduction subspace and is called the central subspace (\cite{Cook1994}, \cite{Cook1996}). Various methods have been proposed to estimate the central subspace in the literature, which are together referred as sufficient dimension reduction methods. Among them, sliced inverse regression (SIR, \cite{Li1991}) is the earliest and most popular method owning to its simplicity, generality and efficiency for computation. \cite{Li1991} proved the consistency of SIR for fixed $p$ setting. \cite{HsingCarroll1992} considered the case where each slice only contained two data points and gave the asymptotic normality results for the SIR estimator. Following their work, \cite{ZhuNg1995} derived the asymptotic properties of the sliced estimator for general cases. \cite{ZhuFang1996} proposed another version of SIR based on the kernel technique and obtained its asymptotic results. All the results summarized above are constricted to fixed $p$ context. \cite{ZhuMiaoPeng2006} studied the asymptotic behaviors of the SIR for $p$ diverging with $n$. A recent work given by \cite{Lin2017} treated the asymptotic performance of the SIR estimator from a different perspective. Instead of SIR, other methods designed for the estimation of the central subspace have also been investigated, including but not limited to sliced average variance estimator (SAVE) proposed by \cite{CookWeisberg1991} and \cite{Cook2000}, principal Hessian directions (PHD, \cite{Li1992}, \cite{Cook1998}), parametric inverse regression (PIR) suggested by \cite{BuraCook2001a} and \cite{BuraCook2001b}, minimum average variance estimator (MAVE, \cite{Xia2002}), contour regression (\cite{Li2005}), inverse regression estimator (\cite{CookNi2005}), the hybrid methods which combined SIR and SAVE in a convex way (\cite{ZhuOhtakiLi2006}), principal fitted components (\cite{Cook2007}), directional reduction (DR, \cite{LiWang2007}), likelihood acquired directions (\cite{CookForzani2009}), semiparametric dimension reduction methods (\cite{MaZhu2012}), and direction estimation via distance covariance (\cite{ShengYin2013}, \cite{ShengYin2016}).

Due to the simplicity and computational efficiency of its algorithm, SIR is the most widely used and most studied method in practice and in the literature. However, although it only requires the linearity condition (\cite{Li1991}) to have the consistency of the SIR estimator, %as a constraint for the distribution of the covariates,
the SIR performs much worse when the distribution of $\bX$ deviates from the normal case. This phenomenon can be seen quite clearly from our simulations. The fact is that the more the $\bX$ deviates form the multivariate normal distribution, the worse the performance of the SIR gets. In principal component analysis (PCA), which is a unsupervised version of dimension reduction, the deviation from normal assumption is also a serious problem. That is, this kind of deviation may lead to the PCA's inconsistency (\cite{JohnstoneLu2009}, \cite{HanLiu2016}) when the dimension $p$ of $\bX$ is growing with the sample size $n$. Aware of this inconsistency problem, \cite{HanLiu2016} proposed a new kind of PCA method based on the Kendall's tau matrix for elliptically distributed $\bX$, called elliptical component analysis (ECA). They proved that the ECA method is consistent for both sparse and non-sparse settings. In this paper, we novelly extend their idea from unsupervised learning to supervised learning via a generalized eigenvector problem (\cite{Li2007}, \cite{ChenZouCook2010}). Consequently, our method can address the problem of low efficiency of SIR in non-normal settings. Furthermore, our method is theoretically sound since the elliptical distribution family naturally satisfies the so-called linearity condition (\cite{Li1991}) and the merits of the introduction of the Kendall's tau matrix for elliptically distributed $\bX$ are then well kept in the sufficient dimension reduction.

The essential reason why we are concerned about the elliptical family comes from the wide range of application of the elliptically distributed data, especially in finance and economics where the distribution of the data often possesses high kurtosis and heavy tailed pattern. For example, \cite{HanLiu2016} studied a high dimensional non-Gaussian heavy-tailed data set on functional magnetic resonance imaging in their paper. \cite{Fan2015} considered the problem of covariance matrix estimation based on large factor model for elliptical data. The proof of the consistency of the SIR in \cite{Lin2017} was based on the assumption that $\bX$ follows a sub-Gaussian distribution. In this paper, we go further steps to investigate the elliptical family of the covariates. To tackle the heavy-tailed problem, we propose a new SIR method called elliptical sliced inverse regression estimator (ESIR) and study both its basic properties and high dimensional properties.% of our method.

Note that \cite{Li1991} had a remark for the robust versions of SIR (Remark 4.4). \cite{Li1991} did not think this issue was crucial and suggested the influential design points be down-weighted or be screened out in the observational study. However, things are different in our paper where the focus is on the elliptical distributions with heavy tails. It's not a problem of experimental design, because the data points are not under control. Besides, screening out those ``bad'' points seems not appropriate. On one hand, the number of those ``bad'' points can be very large due to the heavy tails and removing them from the sample would worsen the estimation efficiency. On the other hand, heavy tails of the data are exactly what we care about (especially in finance and economics) and ignoring this feature could lead to misleading conclusion. To sum up, we believe that it's of great importance to do some in-deep research to address the common heavy-tailed issue.

In the second part of the paper, we construct the convergence rate of the ESIR estimator under the high dimensional setting. Specifically, we allow all of the dimension of the covariates $p$, the number of the slices $H$ and the number of the data points $l$ in each slice to grow with the sample size $n$ at a proper rate. This kind of study is of vital importance due to the escalating of computing power bringing us a large quantity of high dimensional data sets in various fields, as pointed out by \cite{ZhuMiaoPeng2006}.

The rest of the article is organized as follows. In the next section, some background knowledge is given about the elliptical distribution and the Kendall's tau matrix. In Section \ref{s3}, we propose the ESIR estimator, its basic properties and the ESIR algorithm. Consistency and convergence rate of the ESIR estimator for high dimensional $\bX$ are demonstrated in Section \ref{s4}. We present large numbers of  simulation results to compare the estimation efficiency of ESIR with the original SIR and to investigate the influence of $p$, $H$ and $n$ on the estimation accuracy in Section \ref{s5}. Section \ref{s6} concludes the paper and the last section reports the technical proofs of the theorems.

\section{Background}
\label{s2}

\subsection{Elliptical distribution}

Let $\bmu \in \mathbb{R}^p$ and $\bSigma \in \mathbb{R}^{p \times p}$ with full rank (we only consider the full rank case in this paper). If
\[
\bX \overset{\rm{d}}{=}\bmu + \xi \bA \bU
\]
where $\xi$ is a nonnegative continuous scalar random variable, $\bA \in \mathbb{R}^{p \times p}$ is a deterministic matrix with $\bA\bA^{\T}=\bSigma$ and $\bU \in \mathbb{R}^{p \times 1}$ is a uniform random vector on the unit sphere, we say $\bX$ follows an elliptical distribution, \ie, $\bX \sim EC_{p}(\bmu, \bSigma, \xi)$. Here, $\bX\overset{\rm{d}}{=}\bY$ means that the random vectors $\bX$ and $\bY$ follow the same distribution. Throughout the article, without loss of generality we assume $\mathbb{E}(\xi^2) = p$ to guarantee $\cov (\bX) = \bSigma$. The marginal and conditional distributions of an elliptical distribution still belong to the elliptical family and the independent sum of elliptical distributions is also elliptically distributed. Special cases of elliptical distribution include multivariate normal distribution, multivariate t-distribution, symmetric multivariate stable distribution, symmetric multivariate Laplace distribution and multivariate logistic distribution, etc.

Compared with the Gaussian or sub-Gaussian family, the elliptical family enables us to model complex data more flexibly. First of all, the elliptical family includes kinds of heavy-tailed distributions, while the Gaussian is characterized with exponential tail bounds. What's more, we can use elliptical family to describe tail dependence between variables (\cite{HultLindskog2002}, \cite{HanLiu2016}). Thus, elliptical family can be used to model complex data sets such as the financial data, genomic data and bio-imaging data and so on.

\setcounter{equation}{0}
\subsection{ Multivariate Kendall's tau}

Let $\widetilde{\bX}$ be an independent copy of a random vector $\bX \sim EC_{p}(\bmu, \bSigma, \xi)$. We introduce the population multivariate Kendall's tau matrix $\bM \in \mathbb{R}^{p \times p}$ (\cite{HanLiu2016}):
\[
\bM:=\mathbb{E}( \frac{(\bX-\widetilde{\bX})(\bX-\widetilde{\bX})^{\T}}{\|\bX-\widetilde{\bX}\|_2^2}).
\]
Let $\{\bX_{i}\}_{i=1}^n$ be $n$ independent replicates of $\bX$. The sample version of the Kendall's tau matrix is defined as
\[
\widehat \bM := \frac{2}{n(n-1)} \sum_{i'<i} \frac{(\bX_i-\bX_{i'})(\bX_i-\bX_{i'})^{\T}}{\|\bX_i-\bX_{i'}\|_2^2}.
\]
It is easy to derive that $\mathbb{E}(\widehat \bM) = \bM$, $\tr(\widehat\bM)=\tr(\bM)=1$ and $\widehat \bM$ and $\bM$ are both positive definite.% to verify
%Let $\widehat \bSigma$ denote the sample covariance matrix of $\bX$
The sample Kendall's tau matrix is a second-order U-statistic with good properties, that is, the spectral norm of the kernel of the U-statistic
\[
k(\bX_i, \bX_{i'}) := \frac{(\bX_i-\bX_{i'})(\bX_i-\bX_{i'})^{\T}}{\|\bX_i-\bX_{i'}\|_2^2}
\]
is bounded by $1$ which makes $\widehat\bM$ enjoy some good theoretical properties. Furthermore, the convergence of $\widehat\bM$ to $\bM$ doesn't rely on the generating variable $\xi$ thanks to the distribution free property of the kernel(\cite{HanLiu2016}).

Although the multivariate Kendall's tau matrix is not identical or proportional to the covariance matrix $\bSigma$ of $\bX$, under some conditions they share the same eigenspace, see \cite{Marden1999}, \cite{Crous2002}, \cite{Oja2010} and \cite{HanLiu2016}. Moreover, by simple calculation we find that $\widehat \bM$ is a weighted version of the sample covariance matrix $\widehat \bSigma$, that is
\be \label{2.1}
\begin{split}
\widehat \bM &~ = \frac{1}{n(n-1)} \sum_{i'<i} \frac{2}{\|\bX_i-\bX_{i'}\|_2^2} (\bX_i-\bX_{i'})(\bX_i-\bX_{i'})^{\T}\\
&~ := \frac{1}{n(n-1)} \sum_{i'<i} \omega_{ii'} (\bX_i-\bX_{i'})(\bX_i-\bX_{i'})^{\T},
\end{split}
\ee
while
\[
\widehat \bSigma = \frac{1}{n(n-1)} \sum_{i'<i} (\bX_i-\bX_{i'})(\bX_i-\bX_{i'})^{\T}.
\]

\setcounter{equation}{0}
\section{Elliptical sliced inverse regression}
\label{s3}

\subsection{Sliced inverse regression}

In this section, we give a rough overview of the SIR method. The model below is used to explore the theoretical properties of the SIR:
\be \label{3.1}
Y = f (\bbeta_{1} \bX, \dots, \bbeta_{K} \bX, \bepsilonbb),
\ee
where $\bbeta_1, \dots, \bbeta_K$ are unknown $p$ dimensional row vectors, $\bepsilonbb$ is independent of the covariates $\bX$ and $f$ is an arbitrary unknown function defined on $\mathbb{R}^{K+1}$. The linear space $\bB$ generated by $\bbeta$'s is called the efficient dimension reduction (e.d.r.) space and any linear combination of the $\bbeta$'s is referred as an e.d.r. direction. \cite{Li1991} demonstrated that if $\bX$ was standardized by $\bSigma = \cov (\bX)$ to have zero mean and identity covariance matrix, the inverse regression curve $\mathbb{E}(\bX|Y)$ would be contained in the e.d.r. space. Accordingly, the principal component analysis method is applied to the estimated covariance matrix of the inverse regression curve. Hence, the leading vectors of the estimated covariance matrix can then be transformed to estimate the e.d.r. directions. In fact, \cite{Li1991} showed that each $\hat\bbeta_k$ would converge to an e.d.r. direction at rate of $n^{-1/2}$ when $p$ stayed fixed.

The key condition for the SIR method is referred as the linearity condition, \ie, for any $\bb \in \mathbb{R}^p$, $\mathbb{E}(\bb^{\T}\bX | \bbeta_1, \dots, \bbeta_K) = c_0 + c_1 \bbeta_{1} \bX + \dots + c_K \bbeta_K \bX$ for some constants $c_0, \dots, c_K$. To satisfy this condition, the distribution of the covariates is required to be elliptically symmetric. Such distributions include the normal distribution and the general symmetric elliptical distributions.

\subsection{Elliptical sliced inverse regression}

%Following the work of \cite{Li1991},
We construct our basic theorem for elliptical sliced inverse regression (ESIR) in this part. Here, ``E'' represents our focus on the heavy-tailed symmetric elliptical family.

\begin{theorem}\label{t1}
Assume that $\bX \sim EC_p(\bmu, \bSigma, \xi)$. Under (\ref{3.1}) and the linearity condition given by \cite{Li1991}, the curve $\mathbb{E}(\bX | Y) - \mathbb{E}(\bX)$ is contained in the linear subspace spanned by $\bbeta_k \bM (k = 1, 2, \dots, K)$, where $\bM$ denotes the Kendall's tau matrix of $\bX$.
\end{theorem}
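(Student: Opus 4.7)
The plan is to mirror the classical SIR argument of \cite{Li1991}, replacing the role of $\bSigma$ by the multivariate Kendall's tau matrix $\bM$ and invoking the elliptical structure of $\bX$ at precisely the point where the two matrices must be interchanged. After centering so that $\mathbb{E}(\bX) = \bzero$, the claim reduces to showing $\mathbb{E}(\bX \mid Y) \in \spa\{\bM \bbeta_k^{\T} : k = 1, \ldots, K\}$. By duality this is equivalent to: whenever $\bh \in \mathbb{R}^{p}$ satisfies $\bbeta_k \bM \bh = 0$ for every $k$, one has $\mathbb{E}(\bh^{\T} \bX \mid Y) = 0$.

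Fix such an $\bh$. Because $Y = f(\bbeta_1 \bX, \ldots, \bbeta_K \bX, \bepsilonbb)$ with $\bepsilonbb \independent \bX$, the tower property gives
\[
\mathbb{E}(\bh^{\T} \bX \mid Y) = \mathbb{E}\bigl[\mathbb{E}(\bh^{\T} \bX \mid \bbeta_1 \bX, \ldots, \bbeta_K \bX) \bigm| Y \bigr].
\]
The linearity condition forces the inner expectation to take the form $c_{0} + \sum_{k=1}^{K} c_{k}\, \bbeta_{k} \bX$, with scalars $c_{0}, c_{1}, \ldots, c_{K}$ determined by $\bh$ and the second-order structure of $\bX$. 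The argument therefore collapses to showing that the $\bM$-orthogonality $\bbeta_{k} \bM \bh = 0$ forces $c_{k} = 0$ for every $k$; the centering $\mathbb{E}(\bh^{\T} \bX) = 0$ then pins $c_{0} = 0$ as well.

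This last step is where the elliptical assumption must be invoked. Using the stochastic representation $\bX - \widetilde{\bX} = R \bA \bU$ with $\bU$ uniform on the unit sphere, $\bA \bA^{\T} = \bSigma$, and $R$ independent of $\bU$, one obtains the explicit identity
\[
\bM = \bA \, \mathbb{E}\!\left[\frac{\bU \bU^{\T}}{\|\bA \bU\|^{2}}\right] \bA^{\T},
\]
and in particular $\bM$ and $\bSigma$ share a common eigenbasis. I would use this shared eigenstructure, together with the spherical integration identity inside the above display, to re-express the linearity-condition coefficients $c_{k}$ as linear functions of the $\bM$-weighted inner products $\bbeta_{k} \bM \bh$. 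Once this bridging identity is in hand, the hypothesis that all $\bbeta_{k} \bM \bh = 0$ forces $c_{k} = 0$; the inner conditional expectation then vanishes, the outer tower step propagates the vanishing through $Y$, and duality delivers $\mathbb{E}(\bX \mid Y) \in \spa\{\bM \bbeta_k^{\T}\}$ as required.

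The step I expect to be the genuine obstacle is the translation of the linearity-condition coefficients into quantities of the form $\bbeta_{k} \bM \bh$. For the classical covariance version of the theorem this matching is immediate, because $\bSigma$ itself appears in the projection formula that the linearity condition provides for the conditional mean; for $\bM$ the corresponding identity is not transparent and, in my view, constitutes the real content of the result. Closing that gap will likely use both the shared eigenstructure of $\bM$ and $\bSigma$ and the explicit spherical-kernel representation of $\bM$ above. The remaining pieces -- duality, linearity condition, tower property, and propagation of the vanishing through $Y$ -- are the standard \cite{Li1991} template and require no new ideas.
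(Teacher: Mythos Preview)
Your route differs from the paper's: rather than reproving Li's duality argument, the paper invokes Theorem~3.1 of \cite{Li1991} as a black box to place $\mathbb{E}(\bX\mid Y)-\mathbb{E}(\bX)$ in $\spa\{\bbeta_k\bSigma\}$, and then argues directly that $\spa\{\bbeta_k\bM\}=\spa\{\bbeta_k\bSigma\}$ by writing both matrices in their common eigenbasis. Your annihilator argument and the paper's span comparison nonetheless collapse to the same crux, exactly as you flag: since the linearity-condition coefficients satisfy $(c_1,\dots,c_K)^{\T}=(\bB\bSigma\bB^{\T})^{-1}\bB\bSigma\bh$, the implication you need is $\bB\bM\bh=0\Rightarrow\bB\bSigma\bh=0$, which is equivalent to the row-span equality $\spa\{\bbeta_k\bM\}=\spa\{\bbeta_k\bSigma\}$.

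Here is the concrete obstruction. The Han--Liu formula you cite gives $\lambda_j(\bM)/\lambda_j(\bSigma)=\mathbb{E}\bigl[Q_j^{2}/\sum_i\lambda_i(\bSigma)Q_i^{2}\bigr]$, and this ratio genuinely varies with $j$ whenever the $\lambda_i(\bSigma)$ are not all equal; hence $\bM$ is \emph{not} a scalar multiple of $\bSigma$, and for generic $\bB$ the two row spans differ (take $p=2$, $\bSigma=\diag(4,1)$, $\bB=(1,1)$: then $\bB\bSigma=(4,1)$ while $\bB\bM=(\lambda_1(\bM),\lambda_2(\bM))$ with $\lambda_1(\bM)/\lambda_2(\bM)<4$, so the two lines are distinct). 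The paper's proof handles this step by writing $\bgamma^{\T}\bB\bM=(s\bgamma)^{\T}\bB\bSigma$ for an unspecified $s$; with scalar $s$ this requires precisely the constant-ratio property just ruled out, and with a $K\times K$ matrix $s$ it presupposes the span equality one is trying to establish. Your proposed bridge---using the shared eigenstructure and the spherical-kernel representation to rewrite the $c_k$ in terms of the $\bbeta_j\bM\bh$---would require the same span equality and therefore cannot close the gap either. You have located the obstacle precisely; what you should be aware of is that the paper's own argument does not surmount it, and the span identity it asserts fails for anisotropic $\bSigma$ and generic $\bB$.
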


\begin{proof}
Following the conclusion of Theorem 3.1 in \cite{Li1991}, if we can prove that the linear space spanned by $\bbeta_k \bSigma(k=1, \dots, K)$ is the same as the space spanned by $\bbeta_k \bM (k=1, \dots, K)$, we are done.

For any vector $\bgamma \in \mathbb{R}^{K}$, let $\bB = (\bbeta_1, \dots, \bbeta_k)^{\T}$, then the span of $\bbeta_k \bM (k=1, \dots, K)$ can be written as $\bgamma^{\T}\bB\bM$ and
\[
\begin{split}
\bgamma^{\T}\bB\bM
&~ = \bgamma^{\T}\bB \cdot (\sum_{j=1}^{p} \lambda_j(\bM)\bmu_j(\bM)\bmu_{j}^{\T}(\bM))\\
&~ = \bgamma^{\T}\bB \cdot (\sum_{j=1}^{p} \lambda_j(\bM)\bmu_j(\bSigma)\bmu_{j}^{\T}(\bSigma))\\
&~ = \bgamma^{\T}\bB \cdot (\sum_{j=1}^{p} \mathbb{E}(\frac{\lambda_j(\bSigma) Q_j^2}{\lambda_1(\bSigma) Q_1^2 + \dots + \lambda_p(\bSigma)Q_p^2}) \bmu_j(\bSigma)\bmu_{j}^{\T}(\bSigma))\\
&~ := s \bgamma^{\T}\bB \cdot (\sum_{j=1}^{p} \lambda_j(\bSigma)\bmu_j(\bSigma)\bmu_{j}^{\T}(\bSigma))\\
&~ = (s\bgamma)^{\T}\bB \bSigma
\end{split}
\]
where the first equality comes from the spectral decomposition of $ \bM$, the second one is established by the property that $\bmu_j(\bSigma)= \bmu_j(\bM)$ (see \cite{Marden1999}, \cite{Crous2002} ,\cite{Oja2010} and \cite{HanLiu2016} for details), the third equality is given by Proposition 2.1 of \cite{HanLiu2016} and $\bQ := (Q_1, \dots, Q_p)^{\T} \sim N_p(\mathbf{0}, \bI_p)$. The last equality verifies our guess.
\end{proof}

Theorem \ref{t1} is very similar to Theorem 3.1 of \cite{Li1991}. It's not surprising in view of the close relationship between $\bSigma$ and $\bM$ in Section \ref{s2}. Worth noting that we only assume that $\bX$ follows an elliptical distribution and we don't pose any distribution restriction on $\bX|Y$ while \cite{BuraForzani2015} and \cite{Bura2016} required $\bX|Y$ be elliptically distributed and multivariate exponentially distributed respectively.

Let $\bZ = \bM^{-1/2} [\bX - \mathbb{E}(\bX)]$, where $\bM$ is the population Kendall's tau matrix defined above in Section \ref{s2}. Then, (\ref{3.1}) can be rewritten as
\be \label{3.2}
Y = f(\bfeta_1 \bZ, \dots, \bfeta_K \bZ, \bepsilonbb),
\ee
where $\bfeta_k = \bbeta_k \bM^{1/2} (k=1, \dots, K)$. Following the usage in \cite{Li1991}, we call the vector linearly generated by these $\bfeta_k$'s as the standardized e.d.r direction. For this new version of standardized covariates, we have the following corollary.

\begin{cy}\label{c1}
Under (\ref{3.2}) and the linearity condition given by \cite{Li1991}, the curve $\mathbb{E}(\bZ|Y)$ is contained in the linear space generated by $\bfeta_k$'s defined in (\ref{3.2}).
\end{cy}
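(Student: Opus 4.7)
The plan is to obtain Corollary \ref{c1} as a direct consequence of Theorem \ref{t1} by pushing the subspace identified there through the invertible linear map $\bM^{-1/2}$. Since $\bZ = \bM^{-1/2}[\bX - \mathbb{E}(\bX)]$ and conditional expectation is linear, the first step is to write
\[
\mathbb{E}(\bZ \mid Y) \;=\; \bM^{-1/2}\bigl[\mathbb{E}(\bX \mid Y) - \mathbb{E}(\bX)\bigr],
\]
which reduces the problem to understanding how the subspace produced by Theorem \ref{t1} transforms under left multiplication by $\bM^{-1/2}$.

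The second step is the short algebraic observation that $\bM^{-1/2}\bigl(\bM \bbeta_k^{\T}\bigr) = \bM^{1/2} \bbeta_k^{\T} = (\bbeta_k \bM^{1/2})^{\T} = \bfeta_k^{\T}$. Combined with Theorem \ref{t1}, which supplies scalar functions $c_k(Y)$ with $\mathbb{E}(\bX\mid Y) - \mathbb{E}(\bX) = \sum_{k=1}^{K} c_k(Y) \bM \bbeta_k^{\T}$, this immediately gives
\[
\mathbb{E}(\bZ \mid Y) \;=\; \sum_{k=1}^{K} c_k(Y)\, \bfeta_k^{\T},
\]
i.e.\ $\mathbb{E}(\bZ \mid Y)\in \spa\{\bfeta_1, \dots, \bfeta_K\}$, which is the claim.

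Before invoking Theorem \ref{t1}, one small verification is needed: its hypotheses must survive the passage from (\ref{3.1}) to (\ref{3.2}). Ellipticity is preserved under affine transformations, so $\bZ$ remains in the elliptical family, and the linearity condition for (\ref{3.2}) is equivalent to the linearity condition for (\ref{3.1}) because $\bfeta_k \bZ = \bbeta_k (\bX - \mathbb{E}\bX)$ is just an affine function of $\bbeta_k \bX$. No step is genuinely difficult here; the only real ``obstacle'' is notational, namely keeping track of the row-versus-column convention for $\bbeta_k$ and the vectors $\bbeta_k \bM$ in the statement of Theorem \ref{t1}, so that the left multiplication by $\bM^{-1/2}$ on the inverse regression curve is unambiguous. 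Once this bookkeeping is pinned down, the corollary follows by a one-line application of the invertible operator $\bM^{1/2}$.
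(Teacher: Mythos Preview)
Your argument is correct and is exactly the one-line derivation the paper intends: the corollary is stated without proof, immediately after Theorem~\ref{t1}, as the obvious consequence of pushing $\mathbb{E}(\bX\mid Y)-\mathbb{E}(\bX)$ through $\bM^{-1/2}$ and using $\bM^{-1/2}(\bM\bbeta_k^{\T})=\bfeta_k^{\T}$. Your extra paragraph checking that the hypotheses transfer between (\ref{3.1}) and (\ref{3.2}) is harmless but not strictly needed, since you apply Theorem~\ref{t1} directly to $\bX$ under (\ref{3.1}) rather than re-applying it to $\bZ$.
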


We then can easily derive that $\cov[\mathbb{E}(\bZ|Y)$ is degenerate in directions which are orthogonal to $\bfeta_k$'s. Thus, by Lemma \ref{l1} in the Appendix, we conclude that
\be \label{3.3}
\bM_{\mathbb{E}(\bZ|Y)} = \mathbb{E}( \frac{(\mathbb{E}(\bZ|Y)-\mathbb{E}(\widetilde\bZ|\widetilde{Y}))(\mathbb{E}(\bZ|Y)-\mathbb{E}(\widetilde\bZ|\widetilde{Y}))^{\T}}
{\|(\mathbb{E}(\bZ|Y)-\mathbb{E}(\widetilde\bZ|\widetilde{Y}))\|_2^2})
\ee
is also degenerate in those directions mentioned above. Thus, the eigenvectors associated with the largest $K$ eigenvalues of $\bM_{\mathbb{E}(\bZ|Y)}$ are the standardized e.d.r. directions $\bfeta_k (k=1, \dots, K)$. We then transform $\bfeta_k (k=1, \dots, K)$ to $\bbeta_k$ by $\bbeta_k = \bfeta_k \bM^{-1/2} (k=1, \dots, K)$ for the original e.d.r. directions.

Given Corollary \ref{c1} and Lemma \ref{l1} in the Appendix, we construct the operating scheme for elliptical sliced inverse regression:

1. For each $\bX_i (i=1, 2, \dots, n)$, we calculate a new standardized form of $\bX_i$: $\widetilde \bX_i = \widehat \bM^{-1/2} (\bX_i - \bar{\bX})(i=1, 2, \dots, n)$, where $\widehat \bM$ and $\bar{\bX}$ denote the sample Kendall's tau matrix and the sample mean of $\bX$ respectively.

2. Divide the range of $Y$ into $H$ ``equal'' slices, $I_1, \dots, I_H$. Here, ``equal'' means that the number of the data points falling in each slice is equal to $l = \lfloor \frac{n}{H} \rfloor$.

3. In each slice, compute the sample mean of $\widetilde\bX$: $\hat{\bm}_h = 1/l \sum_{\mathrm{y}_i \in I_h} \widetilde{\bX}_i (h = 1, \dots, H)$.

4. Form the Kendall's tau matrix for $\hat{\bm}_h$:
\be \label{3.4}
\widehat{\bM}_{\bm} = 2/(H(H-1)) \sum_{h'<h} \frac{(\hat{\bm}_h-\hat{\bm}_{h'})(\hat{\bm}_h-\hat{\bm}_{h'})^{\T}}{\|\hat{\bm}_h-\hat{\bm}_{h'}\|_2^2} \quad (h = 1, \dots, H),
\ee
then compute the eigenvalues and eigenvectors of $\widehat{\bM}_{\bm}$.

5. Denote the largest $K$ eigenvectors of $\widehat{\bM}_{\bm}$ be $\hat{\bfeta}_k (k=1, \dots, K)$. We transform them back to the original version by $\widehat{\bM}$, that is, $\hat{\bbeta}_k = \hat{\bfeta}_k \widehat{\bM}^{-1/2}$.

In this algorithm, we enforce the number of the data points in each slice to be fixed to $l$ so that we don't need any weighting adjustment for the calculation of $\widehat\bM_{\bm}$. In addition, the data points in the last slice may be not exactly $l$ which exerts little influence asymptotically. Our algorithm is essentially a generalized eigenvector problem. Please see more details in \cite{Li2007} and \cite{ChenZouCook2010}. Of note, like SIR, the ESIR may not recover all the e.d.r. directions. One may refer to other dimension reduction methods like SAVE and DR to address such problems.

\setcounter{equation}{0}
\section{Asymptotic properties of elliptical slice inverse regression with diverging number of covariates}
\label{s4}

In this paper, we assume the data points $l$ stay the same in different slice, and the number of the slices $H$ and $l$ are both allowed to grow with the sample size $n$. During the process of the proof, we use original $\bX$ rather than its standardization. The conclusion can then be directly extended to the standardized version.

Denote the inverse regression curve by $\bm (Y) = \mathbb{E}(\bX|Y)$ and decompose $\bX$ as:
\[
\bX = \bm (Y) + \bepsilonb,
\]
where $\bm (Y) = (m_1(Y), \dots, m_p(Y))^{\T}$ with $m_{i}(Y) = \mathbb{E} (X_i | Y)$ and $\bX = (X_1, \dots, X_p)$  and for the sample version,
\[
\bX_i = \bm (Y_i) + \bepsilonb_i = \bm_i + \bepsilonb_i, \qquad i = 1, \dots, n
\]
and
\[
\bX_{(i)} = \bm (Y_{(i)}) + \bepsilonb_{(i)} = \bm_{(i)} + \bepsilonb_{(i)}, \qquad i = 1, \dots, n
\]
where $Y_{(1}) \le \dots, \le Y_{(n})$ and $\bX_{(i)}$ and $\bepsilonb_{(i)}$ are the concomitants (\cite{Yang1977}) of $\bm (Y_{(i)})$. For each slice, denote
\[
\bX_{(hi)} = \bm (Y_{(hi)}) + \bepsilonb_{(hi)} = \bm_{(hi)} + \bepsilonb_{(hi)}, \qquad i = 1, \dots, l, \quad h = 1, \dots, H.
\]
Here, $\bX_{(hi)} = \bX_{(l(h-1)+i)}$ and $Y_{(hi)} = Y_{(l(h-1)+i)}$. Of note, all the notations given above depend on $n$. Under several mild conditions, we establish the consistency and convergence rate of the ESIR estimator in the following theorem.

\begin{theorem}\label{t2}
Assume the following conditions hold:

 \rm{(1)}. $\bX \sim EC_p(\bmu, \bSigma, \xi)$ and $\sup_{i \le p} \mathbb{E} (|x_i|^{m}) \le \infty $ for some constant $m \ge 2$, where $\bX = (x_1, \dots, x_p)^{\T}$.

 (2). There exist some positive constants $C_1$ and $C_2$ such that
 \[
 C_1 \le \lambda_{\min}(\bM_{\mathbb{E}(\bX|Y)}) \le \lambda_{\max} (\bM_{\mathbb{E}(\bX|Y)}) \le C_2.
 \]

 (3). The inverse regression curve $\bm(Y) = \mathbb{E}(\bX|Y)$ is $\vartheta$-stable regarding $Y$ and $\bm(Y)$. See more details in \cite{Lin2017}.

 %(4). For any $i$ and $i'$ in $1, \dots, n$, there exists some positive constant $C_3$ such that
% \[
% \| \bm_i - \bm_{i'} \|_2 \le C_3.
 %\]

Let $\{\bX_j\}_{j=1}^{n}$ be n independent samples of $\bX$ and %$\widehat\bM_{m} $ be the sample version of Kendall's tau matrix of the vector $\mathbb{E}(\bX|Y)$ defined in (\ref{3.2}). Provided the sample size $n$ is large enough such that
%\be\label{4.1}
%n \ge \frac{16}{3} \cdot l \cdot (r^{*}(\bM_{\mathbb{E}(\bX|Y)})+1)(\log p + \log(1/\alpha)),
%\ee
%where $r^{*}(\bM_{\mathbb{E}(\bX|Y)}) = \frac{\tr(\bM_{\mathbb{E}(\bX|Y)})}{\lambda_1(\bM_{\mathbb{E}(\bX|Y)})}$. We obtain
suppose $p = o(n/r^{*}(\bM_{\mathbb{E}(\bX|Y)}))$ where $r^{*}(\bM_{\mathbb{E}(\bX|Y)}) = \tr(\bM_{\mathbb{E}(\bX|Y)})/\lambda_{\max}(\bM_{\mathbb{E}(\bX|Y)})$. We have
\[
\| \widehat{\bM}_m - \bM_{\mathbb{E}(\bX|Y)} \|_2 = O_p((r^{*}(\bM_{\mathbb{E}(\bX|Y)}))^{1/2} (\log{p})^{1/2} n^{-1/2}) + O_p( H^{-\vartheta}) + O_p(H^{1/2}p^{-1/2}n^{-1/2}).
%\xrightarrow{\mathrm{p}} 0, \qquad as \qquad n \to \infty.
\]
\end{theorem}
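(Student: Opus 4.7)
The plan is to bound $\|\widehat{\bM}_m - \bM_{\mathbb{E}(\bX|Y)}\|_2$ by triangle inequality through two intermediate Kendall's tau matrices, each chosen so that a single one of the three rate terms in the statement arises. Write $\bm_i := \bm(Y_i) = \mathbb{E}[\bX \mid Y_i]$ for the noise-free inverse regression curve evaluated at the $i$-th observation, and define the noise-free slice averages $\widetilde{\bm}_h := l^{-1}\sum_{i=1}^{l} \bm_{(hi)}$. Let $\widetilde{\bM}_m$ be the matrix (\ref{3.4}) built from the $H$ averages $\{\widetilde{\bm}_h\}$ in place of $\{\hat{\bm}_h\}$, and let $\widetilde{\bM}^{(n)}$ be the ordinary multivariate Kendall's tau matrix built from the full collection of $n$ noise-free points $\{\bm_i\}_{i=1}^n$. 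Then
$$
\|\widehat{\bM}_m - \bM_{\mathbb{E}(\bX|Y)}\|_2 \le \underbrace{\|\widehat{\bM}_m - \widetilde{\bM}_m\|_2}_{A_1} + \underbrace{\|\widetilde{\bM}_m - \widetilde{\bM}^{(n)}\|_2}_{A_2} + \underbrace{\|\widetilde{\bM}^{(n)} - \bM_{\mathbb{E}(\bX|Y)}\|_2}_{A_3},
$$
with the intent that $A_1, A_2, A_3$ contribute respectively the rates $O_p(H^{1/2}p^{-1/2}n^{-1/2})$, $O_p(H^{-\vartheta})$, and $O_p((r^*)^{1/2}(\log p)^{1/2}n^{-1/2})$.

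The term $A_3$ is the concentration of a sample Kendall's tau built from $n$ i.i.d.\ values of $\bm(Y)$ around its population mean $\bM_{\mathbb{E}(\bX|Y)}$; this is precisely the setting of Han and Liu (2016), whose effective-rank-sharpened matrix Bernstein proof, combined with the eigenvalue control in condition (2), gives the stated bound verbatim. The term $A_2$ compares the Kendall's tau computed from $H$ slice averages of the noise-free curve with the Kendall's tau computed from all $n$ noise-free values. Under the $\vartheta$-stability of condition (3), each $\bm_{(hi)}$ lies within $O(H^{-\vartheta})$ of $\widetilde{\bm}_h$, and substituting into the kernel $k(u,v)=(u-v)(u-v)^{\T}/\|u-v\|_2^2$ (with the within-slice pairs of the $n$-point U-statistic handled separately, contributing at most $O(H^{-1})$) yields $A_2 = O_p(H^{-\vartheta})$.

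The main work is in $A_1$, the within-slice stochastic fluctuation. Writing $\hat{\bm}_h = \widetilde{\bm}_h + \bar{\bepsilonb}_h$ with $\bar{\bepsilonb}_h := l^{-1}\sum_{i=1}^l \bepsilonb_{(hi)}$, condition (1) gives per-coordinate fluctuations of order $l^{-1/2} = (H/n)^{1/2}$. A first-order expansion of $k$ around $(\widetilde{\bm}_h, \widetilde{\bm}_{h'})$, combined with a uniform high-probability lower bound $\|\widetilde{\bm}_h - \widetilde{\bm}_{h'}\|_2^2 = \Omega(p)$ following from condition (2), delivers $A_1 = O_p(H^{1/2}p^{-1/2}n^{-1/2})$. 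The chief obstacle is securing this uniform lower bound on the denominator across all $\binom{H}{2}$ slice pairs, since the entrywise perturbation from $\bar{\bepsilonb}_h$ is of the same order as the typical entrywise signal of $\widetilde{\bm}_h - \widetilde{\bm}_{h'}$; only aggregation over the $p$ coordinates of the denominator (whose magnitude is $\sim p$) makes the ratio concentrate. Once this uniform lower bound is established and matrix Bernstein is applied to the Taylor-linearized cross term, summing $A_1+A_2+A_3$ produces the theorem.
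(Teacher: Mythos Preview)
Your three-term decomposition is essentially the same as the paper's: it too inserts the Kendall's tau of the $n$ noise-free curve values and the Kendall's tau of $H$ slice-level noise-free means, with Han--Liu concentration handling the first gap, the $\vartheta$-stability of Lin et al.\ handling the second, and a kernel perturbation argument (bounding $\|k(\hat\bm_h,\hat\bm_{h'})-k(\widetilde\bm_h,\widetilde\bm_{h'})\|_2$ via $\|\hat\bm_h-\widetilde\bm_h\|_2=O_p(p^{1/2}l^{-1/2})$ against a denominator of order $p$) handling the third. The only cosmetic differences are the ordering of the three pieces and your use of the empirical slice average $\widetilde\bm_h=l^{-1}\sum_i\bm_{(hi)}$ in place of the paper's conditional mean $\bm_h=\mathbb{E}(\bX\mid Y\in I_h)$; your explicit flagging of the uniform lower bound $\|\widetilde\bm_h-\widetilde\bm_{h'}\|_2^2=\Omega(p)$ as the chief obstacle is apt, since the paper treats that step rather loosely (and indeed acknowledges a related heuristic in its Remark~\ref{r2}).
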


The second part of Condition (1) is similar to that of \cite{HsingCarroll1992}, \cite{ZhuNg1995} and \cite{ZhuMiaoPeng2006}, which only requires $m \ge 2$ rather than $m \ge 4$ in their articles. The reason why we have a much milder condition here may originate from the first part of this condition, \ie, $\bX$ is restricted to be elliptically distributed. Condition (2) is similar to Condition (A3) of \cite{Lin2017}, where they imposed the boundary condition on the eigenvalues of $\bSigma$ while we assume the boundary property holds for $\bM_{\mathbb{E}(\bX|Y)}$. This extension is reasonable due to the connection between $\mathbb{E}(\bX|Y)$ and $\bX$. The last condition is exactly the same as the second part of Condition (A4) in \cite{Lin2017} and we put a remark (Remark \ref{r2}) in Section \ref{s7} for this condition.

Theorem \ref{t2} shows that the effect of $H$ on the convergence rate is two-sided. Hence, we'd better choose a moderate size for the number of the slices, not too big or too small. What's more, the growing rate of $p$ with respect to $n$ is roughly $p \log p = o(n)$, noting that $r^{*}(\bM_{\mathbb{E}(\bX|Y)}) \le p$. Therefore, we only require $H$ to diverge to infinity with $n$ and $p$ satisfying $p \log p = o(n)$ to make $\widehat{\bM}_m$ an consistent estimator of $\bM_{\mathbb{E}(\bX|Y)}$.

\begin{cy} \label{c2}
Under conditions of Theorem \ref{t2}, if the eigenvalues of $\bM$ is bounded away from zero and infinity and $p\log p = o(n^{1/2})$, with probability converging to $1$ we obtain
\[
\|\hat\bM^{-1} \hat\bM_{\bm} - \bM^{-1} \bM_{\mathbb{E}(\bX|Y)} \|_2 \to 0, \quad as \quad n \to \infty.
\]
\end{cy}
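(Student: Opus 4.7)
The plan is to reduce everything to bounds on $\|\hat\bM_{\bm}-\bM_{\mathbb{E}(\bX|Y)}\|_2$ and $\|\hat\bM-\bM\|_2$ via the algebraic identity
\[
\hat\bM^{-1}\hat\bM_{\bm} - \bM^{-1}\bM_{\mathbb{E}(\bX|Y)}
= \hat\bM^{-1}\bigl(\hat\bM_{\bm} - \bM_{\mathbb{E}(\bX|Y)}\bigr) + \bigl(\hat\bM^{-1} - \bM^{-1}\bigr)\bM_{\mathbb{E}(\bX|Y)},
\]
followed by submultiplicativity of the spectral norm. I would then handle the four resulting factors separately, showing three of them are $o_p(1)$ while the remaining two operator norms stay $O_p(1)$.

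For the first summand I would quote Theorem \ref{t2} directly: since $r^{*}(\bM_{\mathbb{E}(\bX|Y)})\le p$ under the assumed eigenvalue conditions, the dominant piece of its bound is $O_p(\sqrt{p\log p/n})$, which is $o_p(1)$ under $p\log p = o(n^{1/2})$, while the $O_p(H^{-\vartheta})$ and $O_p(H^{1/2}p^{-1/2}n^{-1/2})$ terms are absorbed by letting $H$ diverge at a moderate rate, as already emphasized after Theorem \ref{t2}. The factor $\|\bM_{\mathbb{E}(\bX|Y)}\|_2 = O(1)$ follows immediately from Condition (2) of Theorem \ref{t2}.

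For the second summand I would use the resolvent identity
\[
\hat\bM^{-1} - \bM^{-1} = -\hat\bM^{-1}(\hat\bM - \bM)\bM^{-1},
\]
so that $\|\hat\bM^{-1}-\bM^{-1}\|_2 \le \|\hat\bM^{-1}\|_2\,\|\bM^{-1}\|_2\,\|\hat\bM-\bM\|_2$. The concentration bound for $\|\hat\bM-\bM\|_2$ established in \cite{HanLiu2016} (which exploits the fact that the spectral norm of the U-statistic kernel $k(\bX_i,\bX_{i'})$ is bounded by $1$) delivers $\|\hat\bM-\bM\|_2 = o_p(1)$ comfortably under $p\log p = o(n^{1/2})$, while $\|\bM^{-1}\|_2 = O(1)$ follows from the assumed eigenvalue bound on $\bM$.

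The main obstacle, and really the only delicate point, is uniform control of $\|\hat\bM^{-1}\|_2$: a loss of positive-definiteness in $\hat\bM$ would blow up the first summand of the decomposition. My plan is to invoke Weyl's inequality, $\lambda_{\min}(\hat\bM) \ge \lambda_{\min}(\bM) - \|\hat\bM-\bM\|_2$, together with the concentration bound above: once $\|\hat\bM-\bM\|_2 = o_p(1)$ has been established, $\lambda_{\min}(\hat\bM)$ stays bounded away from zero with probability tending to one, so $\|\hat\bM^{-1}\|_2 = O_p(\|\bM^{-1}\|_2) = O_p(1)$. Multiplying the three $o_p(1)$ factors by the two $O_p(1)$ factors in the decomposition then gives the claimed convergence in probability.
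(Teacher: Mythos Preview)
Your proposal is correct and follows essentially the same route as the paper: an add-and-subtract decomposition, the resolvent identity $\hat\bM^{-1}-\bM^{-1}=-\hat\bM^{-1}(\hat\bM-\bM)\bM^{-1}$, the Han--Liu concentration bound for $\|\hat\bM-\bM\|_2$, Weyl's inequality to control $\lambda_{\min}(\hat\bM)$, and Theorem~\ref{t2} for $\|\hat\bM_{\bm}-\bM_{\mathbb{E}(\bX|Y)}\|_2$. The only cosmetic difference is that the paper inserts $\bM^{-1}\hat\bM_{\bm}$ as the intermediate term (so it must argue $\|\hat\bM_{\bm}\|_2=O_p(1)$), whereas you insert $\hat\bM^{-1}\bM_{\mathbb{E}(\bX|Y)}$ and read $\|\bM_{\mathbb{E}(\bX|Y)}\|_2=O(1)$ directly from Condition~(2); both are equally valid.
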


For any matrix $\bB$, denote $col(\bB)$ as the space spanned by the columns of $\bB$. Letting $\bV = (\bbeta_1^{\T}, \dots, \bbeta_K^{\T})$, from Theorem \ref{t1} we have %and the central subspace $col(\bV) = span\{\bbeta_1, \dots, \bbeta_K}$ %where $col(\bB)$ denotes the space spanned by the columns of any matrix $\bB$. From Theorem \ref{t1}, we have
\[
col(\bV) = \bM^{-1} col(\bM_{\mathbb{E}(\bX|Y)}).
\]
Let $\widehat\bV = \hat\bM^{-1} \hat\bM_{\bm}$, Corollary \ref{c2} implies that $\| \bP_{\widehat\bV} - \bP_{\bV} \|_2 \to 0$ which means the consistency of the ESIR method.
%See \cite{Lin2017} for reference.

\setcounter{equation}{0}
\section{Numerical examples}
\label{s5}

We report several numerical results in this section, including different model designs for simulation data and a real data analysis of the Istanbul stock exchange data. The squared multiple correlation coefficient $R^2 (\hat\bbeta_i)$ (\cite{Li1991}, \cite{ZhuMiaoPeng2006}) is used to measure the distance between the ESIR estimator $\hat\bbeta_i$ and the central subspace $\bB$ for $i = 1, \dots, K$ and their average $R^2$ to measure the distance between the space formed by all the $\hat\bbeta$'s and the central subspace (\cite{Li1991}, \cite{ZhuMiaoPeng2006}). $R^2 (\bb)$ is calculated by
\[
R^2 (\bb) = \max_{\bbeta \in \bB} \frac{(\bb\bSigma\bbeta^{\T})^2}{\bb\bSigma\bb^{\T} \cdot \bbeta\bSigma\bbeta^{\T}}
\]
for any $1 \times p$ vector $\bb$. Thus, a bigger squared multiple correlation coefficient indicates more estimation efficiency.

\subsection{Single index model}

Three types of single index models are considered under multivariate normal distribution and other five frequently used elliptical distributions in this part, including the multivariate Laplace distribution, multivariate symmetric logistic distribution, multivariate Student' t distribution with degrees of freedom $2$ and $3$ and the multivariate Cauchy distribution.

Model (A1):
\[
Y = \frac{1}{0.5+(\bbeta_1 \bX+1.5)^2} +\sigma \bepsilonbb.
\]

Model (A2):
\[
Y = 0.5+(\bbeta_1 \bX+1.5)^2 + \sigma \bepsilonbb.
\]

Model (A3):
\[
Y = (\bbeta_1\bX+2) \cdot \sigma \bepsilonbb.
\]

Model (A1) and (A2) come from \cite{Li1991} and Model (A3) is stimulated by Example 3 of \cite{ZhuMiaoPeng2006}. In all the above three models, $\sigma = 0.5$, $\bbeta_1 = (1, 0, \dots, 0)$, %$\bepsilonbb \sim N(0, 1)$. $\bX \sim EC_p(\mathbf{0}, \bSigma, \xi)$ where $\bSigma = (\sigma_{ij})$ with $\sigma_{ij} = 0.5^{|i-j|}$ for $i, j = 1, \dots, p$.
$\bepsilonbb \sim N(0, 1)$ and $\bX \sim EC_p(\mathbf{0}, \bSigma, \xi)$ where $\bSigma = I_{p \times p}$ and $\xi$ is the generating variable. We change the distribution of $\bX$ among the elliptical distributions mentioned above by adjusting the distribution of the generating variable $\xi$. For multivariate logistic distribution, we choose the dependence parameter to be $0.2$ to indicate weak dependence among elements of $\bX$. The sample size $n$, the number of predictors $p$ and the number of the slices $H$ are chosen to be $400$, $10$ and $10$ respectively. Table \ref{tab1} reports the means and standard deviations (in parentheses) of $R^2(\hat\bbeta_1)$ after $100$ replicates under different simulation schemes.

\begin{table}[t!] %***
\caption{Mean and standard deviation of $ R^2(\hat\bbeta_1)$ for the single index models}
\label{tab1}\par
\vskip .2cm
\centerline{\tabcolsep=3truept\begin{tabular}{lcccccc} \hline %***5truept
Distr of $\bX$&normal&Laplace&logistic&t ($3$)&t ($2$)&Cauchy\\ \hline
&$R^2(\hat\bbeta_1)$&$R^2(\hat\bbeta_1)$&$R^2(\hat\bbeta_1)$&$R^2(\hat\bbeta_1)$&$R^2(\hat\bbeta_1)$&$R^2(\hat\bbeta_1)$ \\ \hline
\multicolumn{7}{c}{Model (A1)}\\ \hline
SIR&0.95&0.88&0.90&0.71&0.37&0.10 \\
&(0.02)&(0.07)&(0.05)&(0.19)&(0.25)&(0.12) \\
ESIR&0.95&0.88&0.97&0.84&0.60&0.47 \\
&(0.02)&(0.08)&(0.00)&(0.09)&(0.30)&(0.33) \\ \hline
\multicolumn{7}{c}{Model (A2)}\\ \hline
SIR&1.00&0.97&1.00&0.77&0.42&0.18 \\
&(0.00)&(0.02)&(0.00)&(0.23)&(0.28)&(0.16) \\
ESIR&1.00&0.98&0.97&0.93&0.81&0.48 \\
&(0.00)&(0.01)&(0.00)&(0.05)&(0.18)&(0.36) \\ \hline
\multicolumn{7}{c}{Model (A3)}\\ \hline
SIR&0.91&0.82&0.90&0.66&0.34&0.16 \\
&(0.04)&(0.12)&(0.06)&(0.25)&(0.27)&(0.15) \\
ESIR&0.90&0.80&0.97&0.73&0.49&0.40 \\
&(0.05)&(0.13)&(0.01)&(0.20)&(0.30)&(0.34) \\ \hline
\end{tabular}}
\end{table}

As can be seen from Table \ref{tab1}, the ESIR method outperforms the SIR method in almost all the simulation schemes. Furthermore, the efficiency gain is more significant when the tail of the distribution of $\bX$ tends to become heavier, which can be easily seen from the simulation results for $t(3)$, $t(2)$ and Cauchy ($t(1)$) distributed covariates where a smaller degree of Student's t distribution indicates a heavier tail. For multivariate normal distribution and Laplace distribution, our ESIR estimator performs nearly as well as the SIR. Of note, the tail of the Laplace distribution is very close to that of normal distribution. See Figure 1 for the tails of the marginal distributions of $\bX$.

\begin{figure}\label{f0}
\centering
\includegraphics[scale=0.5]{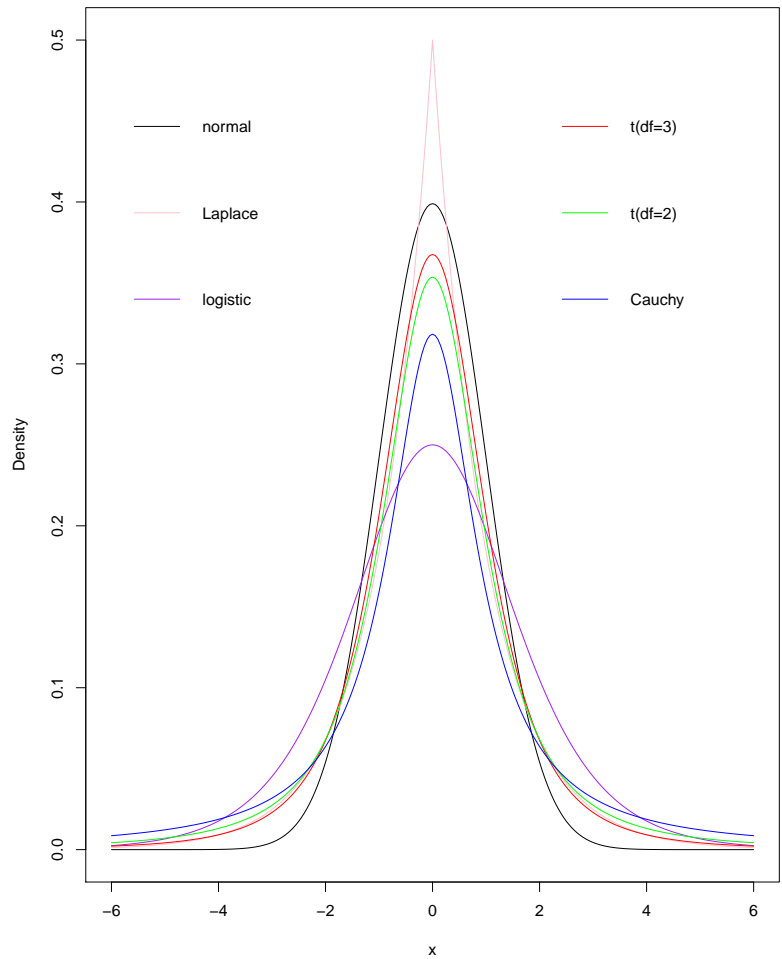}
\caption{Tails of elliptical distributions}
\end{figure}

\subsection{Other models}

Four models are considered in this section under $K = 2$ case for six different elliptical distributions of $\bX$. Unless otherwise noted, the simulation parameters used here are the same as those used in the first part for the single index model.

Model (B1):
\[
Y = \frac{\bbeta_1 \bX}{0.5+(\bbeta_2 \bX+1.5)^2} +\sigma \bepsilonbb,
\]
where $\bbeta_1 = (1, 0, \dots, 0)$, $\bbeta_2 = (0, 1, 0, \dots, 0)$, $\bepsilonbb \sim N(0, 1)$ and $\bX \sim EC_p(\mathbf{0}, \bSigma, \xi)$ where $\bSigma = I_{p \times p}$. This model was used by \cite{Li1991}.

Model (B2):
\[
Y = 4 + \bbeta_1\bX + (\bbeta_2\bX+2)\cdot \sigma \bepsilonbb.
\]
Here, we reset $p = 5$, $\bX \sim EC_p(\mathbf{0}, \bSigma, \xi)$ with $\bSigma = \diag\{2, 2, 2, 4, 2\}$, $\bbeta_{1} = (1,0,0,0,0)$ and $\bbeta_{2} = (0, 1, 1, 0, 0)$.

Model (B3):
\[
Y = (4 + \bbeta_1\bX) \cdot (\bbeta_2\bX+2)+ \sigma \bepsilonbb,
\]
where the simulation parameters for this model are the same as model (B2). Both of Model (B2) and (B3) stem from Example 2 and 3 of \cite{ZhuMiaoPeng2006}.

Model (B4):
\[
Y = (\bbeta_1\bX)^2 + |\bbeta_2\bX| + \sigma \bepsilonbb,
\]
where $\bbeta_1 = (0.5, 0.5, 0.5, 0.5, 0, \dots, 0)$ and $\bbeta_2 = (0.5, -0.5, 0.5, -0.5, 0, \dots, 0)$. This model stems from Example 3 of \cite{ChenCookZou2015}. The distribution of $\bX$ deviates a little bit from the elliptical distribution. That is, let $\bX = (X_1, \bX_2)$ where $\bX_2 = (X_2, \dots, X_p)$, $\bX_2 \sim EC_{p-1}(\mathbf{0}, \bSigma, \xi)$ where $\bSigma = (\sigma_{ij})$ with $\sigma_{ij} = 0.5^{|i-j|}$ for $i, j = 1, \dots, (p-1)$ and $X_1 = |X_2 +X_3 | + \bzeta$ where $\bzeta \sim N(0,1)$.

\begin{table}[t!] %***
\caption{Mean and standard deviation of $ R^2(\hat\bbeta)$ for the other models I}
\label{tab3}\par
\vskip .2cm
\centerline{\tabcolsep=3truept\begin{tabular}{lccccccccc} \hline %***5truept
Distr of $\bX$&\multicolumn{3}{c}{normal}&\multicolumn{3}{c}{logistic}&\multicolumn{3}{c}{EC1} \\ \hline
&$R^2(\hat\bbeta_1)$&$R^2(\hat\bbeta_2)$&$R^2$&$R^2(\hat\bbeta_1)$&$R^2(\hat\bbeta_2)$&$R^2$&$R^2(\hat\bbeta_1)$
&$R^2(\hat\bbeta_2)$&$R^2$ \\ \hline
\multicolumn{10}{c}{Model (B1)}\\ \hline
SIR&0.96&0.88&0.92&0.91&0.20&0.56&0.22&0.19&0.21 \\
&(0.02)&(0.06)&&(0.04)&(0.18)&&(0.18)&(0.15)& \\
ESIR&0.94&0.76&0.85&0.99&0.18&0.59&0.89&0.84&0.87 \\
&(0.03)&(0.16)&&(0.00)&(0.17)&&(0.20)&(0.24)& \\ \hline
\multicolumn{10}{c}{Model (B2)}\\ \hline
SIR&0.99&0.81&0.90&0.99&0.71&0.85&0.48&0.44&0.46 \\
&(0.01)&(0.22)&&(0.01)&(0.27)&&(0.23)&(0.26)& \\
ESIR&0.99&0.62&0.81&1.00&0.74&0.87&0.94&0.85&0.90 \\
&(0.01)&(0.31)&&(0.00)&(0.24)&&(0.13)&(0.24)& \\ \hline
\multicolumn{10}{c}{Model (B3)}\\ \hline
SIR&1.00&0.93&0.97&1.00&0.26&0.63&0.38&0.42&0.40 \\
&(0.00)&(0.05)&&(0.00)&(0.26)&&(0.21)&(0.29)& \\
ESIR&1.00&0.78&0.89&1.00&0.23&0.62&0.88&0.87&0.88 \\
&(0.00)&(0.20)&&(0.00)&(0.24)&&(0.23)&(0.20)& \\ \hline
\multicolumn{10}{c}{Model (B4)}\\ \hline
SIR&0.97&0.66&0.82&1.00&0.67&0.84&0.43&0.42&0.43 \\
&(0.01)&(0.20)&&(0.00)&(0.21)&&(0.15)&(0.07)& \\
ESIR&0.97&0.69&0.83&1.00&0.90&0.95&0.92&0.81&0.87 \\
&(0.01)&(0.16)&&(0.00)&(0.03)&&(0.17)&(0.28)& \\ \hline
\end{tabular}}
\end{table}

\begin{table}[t!] %***
\caption{Mean and standard deviation of $ R^2(\hat\bbeta)$ for other models II}
\label{tab4}\par
\vskip .2cm
\centerline{\tabcolsep=3truept\begin{tabular}{lccccccccc} \hline %***5truept
Distr of $\bX$&\multicolumn{3}{c}{t($3$)}&\multicolumn{3}{c}{t($2$)}&\multicolumn{3}{c}{Cauchy (t($1$))}\\ \hline
&$R^2(\hat\bbeta_1)$&$R^2(\hat\bbeta_2)$&$R^2$&$R^2(\hat\bbeta_1)$&$R^2(\hat\bbeta_2)$&$R^2$&$R^2(\hat\bbeta_1)$
&$R^2(\hat\bbeta_2)$&$R^2$ \\ \hline
\multicolumn{10}{c}{Model (B1)}\\ \hline
SIR&0.89&0.72&0.81&0.77&0.48&0.63&0.25&0.21&0.23 \\
&(0.10)&(0.15)&&(0.16)&(0.20)&&(0.21)&(0.15)& \\
ESIR&0.93&0.54&0.74&0.90&0.49&0.70&0.79&0.66&0.73 \\
&(0.05)&(0.24)&&(0.07)&(0.27)&&(0.26)&(0.28)& \\ \hline
\multicolumn{10}{c}{Model (B2)}\\ \hline
SIR&0.98&0.42&0.70&0.90&0.36&0.63&0.67&0.33&0.50 \\
&(0.02)&(0.32)&&(0.13)&(0.27)&&(0.23)&(0.22)& \\
ESIR&0.98&0.32&0.65&0.98&0.37&0.68&0.95&0.67&0.81 \\
&(0.02)&(0.28)&&(0.03)&(0.30)&&(0.09)&(0.33)& \\ \hline
\multicolumn{10}{c}{Model (B3)}\\ \hline
SIR&0.96&0.73&0.85&0.85&0.47&0.66&0.40&0.41&0.41 \\
&(0.05)&(0.25)&&(0.15)&(0.28)&&(0.27)&(0.27)& \\
ESIR&0.99&0.59&0.79&0.95&0.47&0.71&0.86&0.68&0.77 \\
&(0.01)&(0.30)&&(0.06)&(0.31)&&(0.20)&(0.28)& \\ \hline
\multicolumn{10}{c}{Model (B4)}\\ \hline
SIR&0.93&0.33&0.63&0.85&0.22&0.54&0.62&0.12&0.37 \\
&(0.07)&(0.24)&&(0.14)&(0.20)&&(0.19)&(0.13)& \\
ESIR&0.91&0.40&0.66&0.87&0.34&0.61&0.86&0.56&0.71 \\
&(0.05)&(0.25)&&(0.10)&(0.25)&&(0.15)&(0.32)& \\ \hline
\end{tabular}}
\end{table}

In these $K = 2$ cases, we include another elliptical distribution $EC1$ and remove the Laplace distribution, because the result of Laplace distribution is quite similar to that of normal distribution. Define $EC1 = EC_p(\mathbf{0}, \bSigma, \xi_1)$ with $\xi_1 \sim F(p, 1)$ where F indicates F distribution. Here $\xi_1$ does not have finite mean. %$EC_2 = EC_p(\mathbf{0}, \bSigma, \xi_2\cdot\sqrt{p/\mathbb{E}\xi_2^2} )$ with $\xi_2 \sim Exp(1)$.
This distribution was also used in \cite{HanLiu2016}. Table \ref{tab3} and \ref{tab4} exhibit a little bit difference from that of single index models. That is, while the first leading eigenvector or direction presents almost the same efficiency improvement as in the single index case,  the second estimated direction performs not so well as the SIR estimator under several simulation settings. However, one can find that when the tail of the distribution gets heavier, the ESIR estimation for the second e.d.r. direction inclines to become more accurate. From a comprehensive point of view, the  ESIR estimation efficiency is comparable or better than that of the SIR method. The robustness of ESIR is well demonstrated in Table \ref{tab4}, \ie, when the tail of the distribution of the covariates goes heavier (from t(3) to t(1)), the performance of our proposed ESIR method is getting better.%, which brings much efficiency gain for the ESIR estimator.

To examine the influence of $p$, $H$ and $n$ on the estimation efficiency of the ESIR estimator, we consider the combinations of
$n = 120, 200, 400$, $p = 5, 10, 30$ and $H = 5, 10, 20, 40$ in Model (B1) for Cauchy distributed covariates. Simulation results are presented in Table \ref{tab2} after $100$ replicates.

\begin{table}[t!] %***
\caption{Estimation of the central subspace for Model (B1)}
\label{tab2}\par
\vskip .2cm
\centerline{\tabcolsep=3truept\begin{tabular}{clrrr|rrr|rrr|rrr} \hline %***5truept
\multicolumn{2}{c}{H}&\multicolumn{3}{c}{5}&\multicolumn{3}{c}{10}&\multicolumn{3}{c}{20}&\multicolumn{3}{c}{40} \\ \hline
\multicolumn{2}{c}{p}&5&10&30&5&10&30&5&10&30&5&10&30\\ \hline
\multicolumn{14}{c}{$n=120$}\\ \hline
SIR&$R^2(\hat\bbeta_1)$&0.47&0.24&0.12&0.47&0.27&0.12&0.46&0.24&0.10&0.41&0.24&0.09\\
&$R^2(\hat\bbeta_2)$&0.42&0.21&0.08&0.41&0.22&0.07&0.42&0.24&0.08&0.41&0.17&0.08\\
ESIR&$R^2(\hat\bbeta_1)$&0.86&0.82&0.74&0.83&0.77&0.73&0.87&0.76&0.69&0.82&0.74&0.67\\
&$R^2(\hat\bbeta_2)$&0.62&0.60&0.56&0.68&0.60&0.59&0.70&0.66&0.54&0.75&0.64&0.58\\ \hline
\multicolumn{14}{c}{$n=200$}\\ \hline
SIR&$R^2(\hat\bbeta_1)$&0.52&0.29&0.13&0.44&0.28&0.13&0.48&0.32&0.10&0.43&0.25&0.10\\
&$R^2(\hat\bbeta_2)$&0.37&0.25&0.08&0.44&0.21&0.09&0.43&0.21&0.07&0.40&0.17&0.09\\
ESIR&$R^2(\hat\bbeta_1)$&0.87&0.78&0.70&0.86&0.75&0.63&0.85&0.79&0.72&0.87&0.74&0.68\\
&$R^2(\hat\bbeta_2)$&0.66&0.56&0.49&0.67&0.58&0.57&0.66&0.62&0.54&0.65&0.66&0.56\\ \hline
\multicolumn{14}{c}{$n=400$}\\ \hline
SIR&$R^2(\hat\bbeta_1)$&0.48&0.29&0.11&0.48&0.28&0.10&0.48&0.26&0.12&0.43&0.25&0.09\\
&$R^2(\hat\bbeta_2)$&0.40&0.21&0.08&0.42&0.22&0.06&0.39&0.19&0.07&0.39&0.20&0.07\\
ESIR&$R^2(\hat\bbeta_1)$&0.87&0.80&0.76&0.91&0.73&0.72&0.92&0.86&0.67&0.88&0.77&0.67\\
&$R^2(\hat\bbeta_2)$&0.68&0.58&0.56&0.70&0.60&0.64&0.71&0.64&0.58&0.71&0.65&0.63\\ \hline
\end{tabular}}
\end{table}

In this setting, we avoid reporting the stand deviations and the averages of $\hat\bbeta_1$ and $\hat\bbeta_2$ to improve the clarity of the simulation results. From Table \ref{tab2}, we find that when $n$ and $H$ stay fixed, the larger $p$ causes $R^2(\hat\bbeta_1)$ and $R^2(\hat\bbeta_2)$ to become smaller as high dimension reduces the estimation efficiency of both SIR and ESIR. However, our ESIR method is not so sensitive to dimensionality as SIR. Looking at the rows of Table \ref{tab2}, $R^2(\hat\bbeta_1)$ and $R^2(\hat\bbeta_2)$ regarding the ESIR method decrease much slower when $p$ gets larger. Secondly, when $n$ gets larger, both SIR and ESIR tend to perform better which fits our expectation. Lastly, the number of slices doesn't have any significant impact on the estimation of both methods. It's not surprising because \cite{ZhuNg1995} and \cite{ZhuMiaoPeng2006} also found such a phenomenon in their simulation studies for SIR method. We don't present simulation results for other distributions or other models here as they are quite similar to those of Table \ref{tab2}.

\subsection{Real data analysis}

In this part, we exploit the Istanbul stock exchange data set \\ (\href{http://archive.ics.uci.edu/ml/datasets/ISTANBUL+STOCK+EXCHANGE}{http://archive.ics.uci.edu/ml/datasets/ISTANBUL+STOCK+EXCHANGE}) to demonstrate the superiority of ESIR in contrast with SIR when the explaining variables are characterized by non-Gaussian and heavy-tailed features. %This data set is not so appropriate but enough to reach our goals.
There are $8$ variables in the data sets from Jan 5, 2009 to Feb 22, 2011 (536 rows in all):  Istanbul stock exchange national 100 index (ISE), Standard Poole 500 return index (SP), Stock market return index of Germany (DAX), Stock market return index of UK (FTSE), Stock market return index of Japan (NIKKEI), Stock market return index of Brazil (BOVESPA), MSCI European index (EU) and MSCI emerging markets index (EM). We choose EM as the response variable and the other variables as the covariates to formulate a regression problem.

Firstly, we explore the marginal distributions of the independent variables. Two normality tests, Shapiro-Wilk test and Kolmogorov-Smirnovare test, are conducted to check the non-Gaussian feature of these variables. Table \ref{tab5} summarizes the Shapiro-Wilk statistics and Kolmogorov-Smirnovare statistics. We find that the covariates are all considered as non-Gaussian distributed at the significance level of $0.05$ except that the conclusion on the variable ISE is questionable. We then plot the empirical densities of the standardized covariates against the stand normal distribution to illustrate the heavy-tailed pattern. Figure 2 exhibits this heavy-tailed character pretty clearly. It can also be seen from this figure that all the independent variables are symmetric about $0$. Thus, we can readily apply our ESIR method to this data set.

%tab5
\begin{table}[t!] %***
\caption{Normality tests. The '***', '*' and '$\cdot$' in cells represents the p-value less than $0.001$, $0.05$ and $0.15$ respectively.}
\label{tab5}\par
\vskip .2cm
\centerline{\tabcolsep=3truept\begin{tabular}{lcc} \hline %***5truept
&Shapiro-Wilk&Kolmogorov-Smirnov \\ \hline
ISE&$0.98^{***}$&$0.05^{\cdot}$\\
SP&$0.94^{***}$&$0.11^{***}$\\
DAX&$0.97^{***}$&$0.07^{*}$\\
FTSE&$0.97^{***}$&$0.07^{*}$\\
NIKKEI&$0.98^{***}$&$0.07^{*}$\\
BOVESPA&$0.97^{***}$&$0.06^{*}$\\
EU&$0.97^{***}$&$0.07^{*}$\\ \hline
\end{tabular}}
\end{table}

\begin{figure}\label{fig2}
\centering
\includegraphics[scale=0.5]{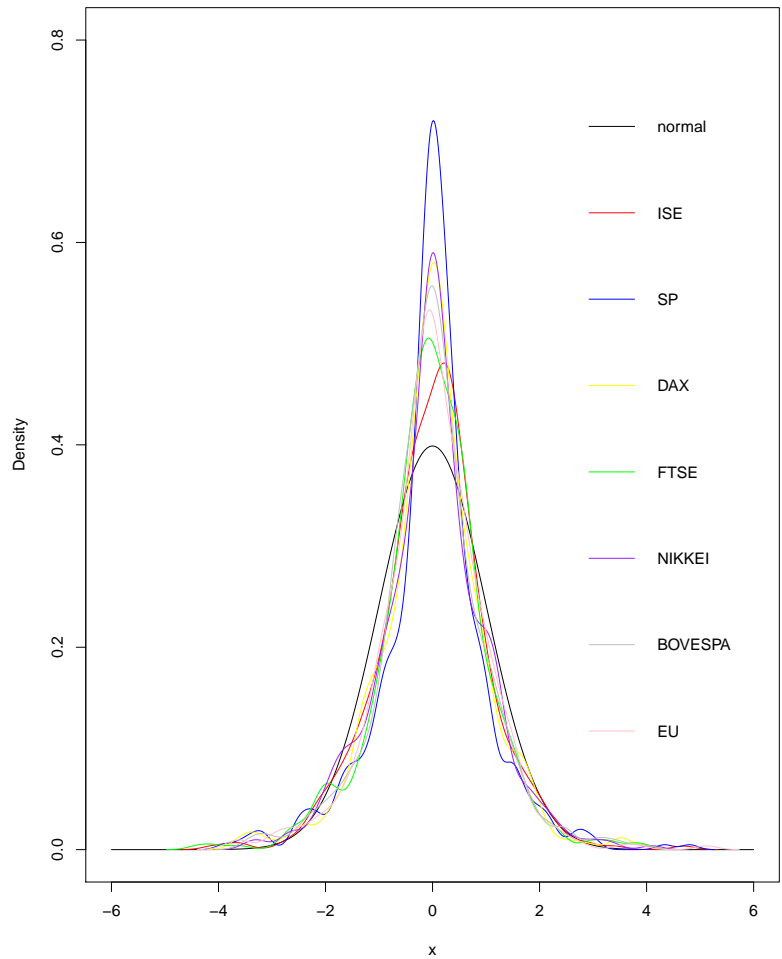}
\caption{Empirical densities of the covariates}
\end{figure}

To determine the dimension $K$ of the central subspace, we apply the widely-used marginal dimension test with the number of slices being $10$. The test result suggests that $K=2$ would be a proper choice. Therefore, we set the dimension of the central subspace to be $K=2$ and the number of slices $H = 10$. %to avoid making things too complicated.
After estimating the e.d.r. directions, we get two new factors: $\hat\bbeta_1\bX$ and $\hat\bbeta_2\bX$, then use them and $(\hat\bbeta_1\bX)^2$, $(\hat\bbeta_2\bX)^2$, and $(\hat\bbeta_1\bX) \cdot (\hat\bbeta_2\bX)$ as explanatory variables to fit $EM$. The adjusted R-squared and F statistics are then exploited to compare the performances of ESIR and SIR. We first try samples of the whole time period (in fact, we used the first $500$ samples for computational convenience) and extend to investigate three shorter periods which simultaneously appear to possess significant heavy tails (see Figure 3). The results are presented in Table \ref{tab6}. Obviously, our method outperforms the SIR in all the four periods with significantly larger values of both R-squared and F statistics. This finding is complied with the simulation results above. It can be conjectured that the ESIR method would work better for individual asset returns and more risky financial derivatives.

\begin{figure}\label{fig3}
\centering
\includegraphics[width=0.8\textwidth,height=0.8\textheight]{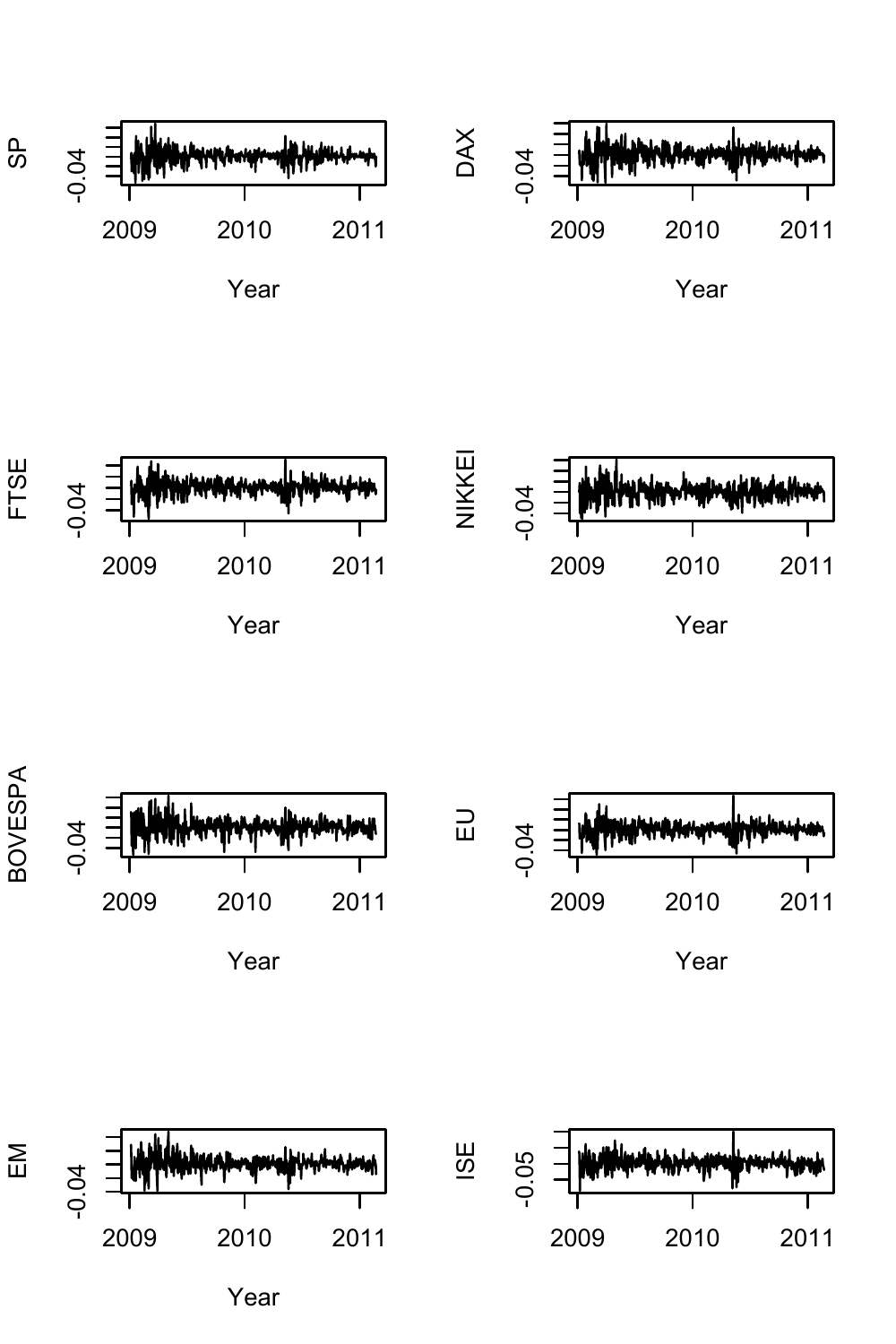}
\caption{The variables}
\end{figure}

%tab6
\begin{table}[t!] %***
\caption{Regression results.}
\label{tab6}\par
\vskip .2cm
\centerline{\tabcolsep=3truept\begin{tabular}{lcccc} \hline %***5truept
Time period&\multicolumn{2}{c}{Adjusted $R^2$}&\multicolumn{2}{c}{F statistic} \\ \hline
&SIR&ESIR&SIR&ESIR\\ \hline
2009.01.05-2009.03.13&0.69&0.71&241.30&264.40 \\
2009.03.30-2009.06.10&0.60&0.73&159.60&291.90 \\
2010.04.27-2010.07.06&0.64&0.71&187.70&266.30 \\
2009.01.05-2011.01.04&0.56&0.71&137.10&267.60 \\ \hline
\end{tabular}}
\end{table}

\setcounter{equation}{0}
\section{Discussion}
\label{s6}

In this paper, we propose the elliptically sliced inverse regression method for sufficient dimension reduction, which is a robust alternative to SIR for analysing high dimensional, elliptically distributed data. The main idea is to introduce the multivariate Kendall's tau matrix in a generalized eigenvector problem to cope with the heavy-tailed elliptically distributed covariates. We then present the main theorem to demonstrate the rationality of the ESIR estimator and give a practical algorithm for the ESIR method. The asymptotic behavior of the ESIR estimator is studied and the corresponding convergence rate is obtained in high dimensional setting. Simulation results demonstrate that ESIR significantly improves the estimation efficiency for the central subspace under the setting of elliptically distributed covariates. Moreover, our method can be easily extended to most other sufficient dimension reduction methods such as SAVE, directional reduction (DR, \cite{LiWang2007}) and principal fitted components (PFC, \cite{CookForzani2009}) etc. Please refers to \cite{Li2007} and \cite{ChenZouCook2010} for generalized eigenvector problem. Lastly, our method is of vital importance for analyzing heavy-tailed financial, genomic and bioimaging data. Of note, we do not spend energy on ascertaining the dimension of the central subspace in this paper and leave it for further study.

%\section{Acknowledgements}
%This work was funded by the ``China Scholarship Council ([2017]3109)''. Dr. Jia Zhang (No. 201706980021) thanks the China Scholarship Council for financial support to visit National University of Singapore, Singapore.

\setcounter{equation}{0}
\section{Proofs}
\label{s7}

\subsection{Lemma 1}

\begin{lemma}\label{l1}
Assume $\bX \sim EC_p(\bmu, \bSigma, \xi)$ and $\cov(\mathbb{E}(\bX|Y)) = \bOmega \bLambda \bOmega^{\T}$, where $\bOmega = (\bomega^{(1)}, \dots, \bomega^{(p)})^{\T}$ is the $p \times p$ matrix of the eigenvectors and $\bLambda = \diag(\lambda_1, \dots, \lambda_p)$ with $\lambda_1 \ge \lambda_2 \ge \dots \ge \lambda_p$ being the corresponding eigenvalues. Letting $\bM_{\mathbb{E}(\bX|Y)}$ denote the population Kendall's tau matrix of the vector $\mathbb{E}(\bX|Y)$, by Proposition of \cite{Marden1999} we have
\[
\bM_{\mathbb{E}(\bX|Y)} = \bOmega \bLambda_1 \bOmega^{\T},
\]
where $\bLambda_1$ is a $p \times p$ diagonal matrix containing the eigenvalues of $\bM_{\mathbb{E}(\bX|Y)}$.
\end{lemma}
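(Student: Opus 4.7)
The plan is to combine two ingredients: the low-dimensional constraint on the inverse regression curve inherited from Theorem \ref{t1}, and \cite{Marden1999}'s proposition relating the covariance matrix and the Kendall's tau matrix of an elliptical distribution.

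First, from Theorem \ref{t1} under the elliptical assumption and the linearity condition, the curve $\mathbb{E}(\bX|Y)-\mathbb{E}(\bX)$ lies almost surely in the $K$-dimensional subspace $V=\spa(\bM\bbeta_1^{\T},\dots,\bM\bbeta_K^{\T})$. Thus $\cov(\mathbb{E}(\bX|Y))$ has rank at most $K$, so in the decomposition $\bOmega\bLambda\bOmega^{\T}$ one has $\lambda_{K+1}=\cdots=\lambda_p=0$, the columns $\bomega^{(1)},\dots,\bomega^{(K)}$ of $\bOmega$ give an orthonormal basis of $V$, and $\bomega^{(K+1)},\dots,\bomega^{(p)}$ give one of $V^{\perp}$.

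Next, I would show that $\bM_{\mathbb{E}(\bX|Y)}$ shares the null space $V^{\perp}$. Let $\widetilde\bX,\widetilde{Y}$ denote an independent copy of $\bX,Y$. The difference $\mathbb{E}(\bX|Y)-\mathbb{E}(\widetilde\bX|\widetilde{Y})$ also lies in $V$ almost surely (the translation by $\mathbb{E}(\bX)$ cancels), so the random kernel $(\cdot)(\cdot)^{\T}/\|\cdot\|_2^2$ in the definition of $\bM_{\mathbb{E}(\bX|Y)}$ has column space inside $V$. Taking expectations shows $\bM_{\mathbb{E}(\bX|Y)}$ has range in $V$ and vanishes on $V^{\perp}$, so $\bomega^{(K+1)},\dots,\bomega^{(p)}$ are already eigenvectors with eigenvalue zero, matching the corresponding block of $\bLambda$.

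Finally, for the nontrivial eigenvectors inside $V$, I appeal to \cite{Marden1999}'s proposition as cited in the statement. Passed through the subspace $V$, the elliptical symmetry of $\bX$ is propagated to the restricted covariance and Kendall's tau matrices of $\mathbb{E}(\bX|Y)$, forcing them to be simultaneously diagonalized by a common orthonormal basis; combined with step two this delivers $\bM_{\mathbb{E}(\bX|Y)}=\bOmega\bLambda_1\bOmega^{\T}$ for a diagonal $\bLambda_1$.

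The main obstacle is the last step: $\mathbb{E}(\bX|Y)$ is not itself elliptically distributed in general, so Marden's result cannot be applied verbatim. The key is to transfer the elliptical symmetry of $\bX$ through to the restricted inverse regression curve inside $V$, exactly analogous to the spectral-decomposition trick used in the proof of Theorem \ref{t1}, where the eigenvector identity $\bmu_j(\bSigma)=\bmu_j(\bM)$ plays this role. Once that symmetry transfer is justified, the remaining bookkeeping (matching dimensions, reading off the diagonal entries of $\bLambda_1$) is routine.
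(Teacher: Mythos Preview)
Your route departs from the paper's and leaves the central step unresolved. The paper does not use Theorem~\ref{t1} or the rank-$K$ structure at all; it works directly with the coordinatewise sign-symmetry underlying elliptical vectors. Writing $\bX=\bOmega\bW+\bb$ with $\bOmega$ orthogonal and $\bW$ coordinatewise symmetric about $0$ (i.e.\ $\bg\bW\overset{\rm d}{=}\bW$ for every diagonal sign matrix $\bg$), one has $\mathbb{E}(\bX\mid Y)=\bOmega\,\mathbb{E}(\bW\mid Y)+\bb$, and the paper argues that $F_{\bW}(Y):=\mathbb{E}(\bW\mid Y)$ inherits this coordinatewise symmetry. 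That is precisely the hypothesis of Marden's proposition, so both $\cov(\mathbb{E}(\bX\mid Y))$ and $\bM_{\mathbb{E}(\bX\mid Y)}$ are simultaneously diagonalized by $\bOmega$. No splitting into a null space and $V$ is needed; your steps 1--2 then come for free.

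Your step 3 is the entire difficulty, and the analogy with Theorem~\ref{t1} does not close it. The spectral trick there simply expands $\bM$ in the eigenbasis of $\bSigma$ using the already-known identity $\bmu_j(\bSigma)=\bmu_j(\bM)$ for $\bX$ itself; it gives no mechanism for transferring eigenvector equality to the different random vector $\mathbb{E}(\bX\mid Y)$, which is not elliptical. What Marden's proposition actually requires is a sign-symmetry decomposition of the vector in question, and that is exactly what must be exhibited for $\mathbb{E}(\bX\mid Y)$---restricting to $V$ does not produce it. Finally, by invoking Theorem~\ref{t1} you import model~(\ref{3.1}) and the linearity condition, neither of which Lemma~\ref{l1} assumes; the paper's argument uses only $\bX\sim EC_p(\bmu,\bSigma,\xi)$.
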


\begin{proof}
The proof for this Lemma is mainly based on the results of \cite{Marden1999}. For any $\bX \sim EC_p(\bmu, \bSigma, \xi)$, we have the decomposition below:
\be \label{7.1}
\bX = \bOmega\bW + \bb
\ee
where $\bOmega$ is some orthogonal matrix, $\bW \in \mathbb{R}^p$ is coordinatewise symmetric about 0, that is
\be \label{7.2}
\bg \bW \overset{\rm{d}}{=} \bW
\ee
for any matrix $\bg $ with $\bg_{jj} \in \{1, -1\}$ and $\bg_{ij}=0 (i \ne j)$ and $\bb$ is some $p \times 1$ centering vector. We assume that $\cov (\bW)$ exists, and without loss of generality, that $\lambda_1 \ge \dots \ge \lambda_p$ with $\lambda_i = \var (W_i)$ for $\bW = (W_1, \dots, W_p)^{\T}$. Thus, we obtain $\bSigma = \cov (\bX) = \bOmega \bLambda \bOmega^{T}$, where $\bLambda = \diag (\lambda_1, \dots, \lambda_p)$ with $\lambda_1 \ge \dots \ge \lambda_p$.

For the vector $F(Y) = \mathbb{E} (\bX | Y)$, from (\ref{7.1}) we have
\be \label{7.3}
F(Y) = \mathbb{E} (\bX | Y) = \mathbb{E} (\bOmega \bW + \bb | Y) = \bOmega \mathbb{E} (\bW | Y) + \bb := \bOmega F_{\bW}(Y) + \bb.
\ee
Then for $F_{\bW}(Y) = \mathbb{E} (\bW | Y)$, we can derive from (\ref{7.2}) for any $\bg $ with $\bg_{jj} \in \{1, -1\}$ and $\bg_{ij}=0 (i \ne j)$,
\be \label{7.4}
\bg F_{\bW}(Y) = \bg \mathbb{E} (\bW | Y) = \mathbb{E} (\bg \bW | Y) = \mathbb{E} (\bW | Y) = F_{\bW}(Y).
\ee
Therefore, $F_{\bW}(Y) = \mathbb{E} (\bW | Y)$ is coordinatewise symmetric about 0.

By Proposition of \cite{Marden1999} and (\ref{7.3}) and (\ref{7.4}), we obtain
\[
\bM_{\mathbb{E} (\bX | Y)} = \bOmega \bLambda_1 \bOmega^{\T},
\]
where $\bLambda_1$ is a $p \times p$ diagonal matrix whose diagonal elements are the eigenvalues of $\bM_{\mathbb{E} (\bX | Y)}$.
\end{proof}

\subsection{Proof of Theorem \ref{t2}}\label{pt1}

\begin{proof}
\[
\begin{split}
&~ \| \widehat{\bM}_\bm - \bM_{\mathbb{E}(\bX|Y)} \|_2\\
&~ = \| \frac{2}{n(n-1)} \sum_{i' < i} \frac{(\bm_i-\bm_{i'})(\bm_i-\bm_{i'})^{\T}}{\| \bm_i-\bm_{i'} \|_2^2} - \mathbb{E} \frac{(\bm(Y)-\bm(\widetilde Y))(\bm(Y)-\bm(\widetilde Y))^{\T}}{\| \bm(Y)-\bm(\widetilde Y) \|_2^2}\\
&~ + \frac{2}{H(H-1)} \sum_{h'<h} \frac{(\bm_h-\bm_{h'})(\bm_h-\bm_{h'})^{\T}}{\|\bm_h-\bm_{h'}\|_2^2} - \frac{2}{n(n-1)} \sum_{i' < i} \frac{(\bm_i-\bm_{i'})(\bm_i-\bm_{i'})^{\T}}{\| \bm_i-\bm_{i'} \|_2^2} \\
&~ + \frac{2}{H(H-1)} \sum_{h'<h} \frac{(\hat\bm_h-\hat\bm_{h'})(\hat\bm_h-\hat\bm_{h'})^{\T}}{\|\hat\bm_h-\hat\bm_{h'}\|_2^2} \|_2 - \frac{2}{H(H-1)} \sum_{h'<h} \frac{(\bm_h-\bm_{h'})(\bm_h-\bm_{h'})^{\T}}{\|\bm_h-\bm_{h'}\|_2^2} \|_2\\
%&~ = \| \widehat{\bM}_m - \widehat{\bM}_{\mathbb{E}(\bX|Y)} + \widehat{\bM}_{\mathbb{E}(\bX|Y)} - \bM_{\mathbb{E}(\bX|Y)} \|_2\\
%&~ \le \| \widehat{\bM}_m - \widehat{\bM}_{\mathbb{E}(\bX|Y)} \|_2 + \| \widehat{\bM}_{\mathbb{E}(\bX|Y)} - \bM_{\mathbb{E}(\bX|Y)} \|_2\\
&~ :=\| A_1 + A_2 + A_3 \|_2 \\
&~ \le \|A_1\|_2 + \|A_2\|_2 + \|A_3\|_2
\end{split}
\]
%where $\widehat{\bM}_{\mathbb{E}(\bX|Y)} = 2/(H(H-1)) \sum_{h'<h} \frac{({m}_h-{m}_{h'})({m}_h-{m}_{h'})^{\T}}{\|{m}_h-{m}_{h'}\|_2^2}$,
where $\bm_h = \mathbb{E}(\bX|Y \in I_h)$ and $\hat{\bm}_h = 1/l \sum_{\mathrm{y}_i \in I_h} \bX_i $ for $(h = 1, \dots, H)$.

For the first part $A_1$, by Theorem 3.1 of \cite{HanLiu2016}, if $p = o(n/r^{*}(\bM_{\mathbb{E}(\bX|Y)}))$, we can easily obtain
\[
\|A_1\|_2 = O_p \bigg(\|\bM_{\mathbb{E}(\bX|Y)} \|_2 \sqrt{r^{*}(\bM_{\mathbb{E}(\bX|Y)}) \log{p}/n}\bigg) = O_p\bigg(\sqrt{r^{*}(\bM_{\mathbb{E}(\bX|Y)}) \log{p}/n}\bigg),
\]
where $r^{*}(\bM_{\mathbb{E}(\bX|Y)}) = \frac{\tr(\bM_{\mathbb{E}(\bX|Y)})}{\lambda_1(\bM_{\mathbb{E}(\bX|Y)})} $. The last equality is ensured by Condition (2).

For the second part $A_2$, if Condition (3) stands, from Remark 1 and 2 in \cite{Lin2017}, we get
\be \label{7.5}
\|\frac{1}{H} \sum_{h=1}^H \bm_h \bm_h^{\T} - \frac{1}{n} \sum_{i=1}^n \bm_i \bm_i^{\T} \|_2= O_p(H^{-\vartheta}).
\ee
Using the relationship of $\hat\bSigma$ and $\hat\bM$ in Section \ref{s2} and the second part of Condition (1), we obtain
\[
\|A_2\|_2 = O_p( H^{-\vartheta}).
\]

For the third part $A_3$, elementary probability theory tells us that the elements of $\hat\bm_h$ converge to those of $\bm_h$ at the rate of $l^{-1/2}$. Thus, $\|\hat\bm_h - \bm_h\|_2 = O_p(p^{1/2}l^{-1/2})$ and $\| \hat\bm_h - \hat\bm_{h'} - (\bm_h - \bm_{h'}) \|_2 = O_p(p^{1/2}l^{-1/2}) $. Now we introduce some notation for easy presentation: $\widehat M_{hh'} := \hat\bm_h - \hat\bm_{h'}$ and $ M_{hh'} := \bm_h - \bm_{h'}$. Then we only need to consider the term below:
\[
\begin{split}
&~ \| \frac{\widehat M_{hh'} \widehat M_{hh'}^{\T} }{\| \widehat M_{hh'} \|_2^2 } - \frac{ M_{hh'} M_{hh'}^{\T} }{\| M_{hh'} \|_2^2 } \|_2\\
&~ = \| \frac{\widehat M_{hh'} \widehat M_{hh'}^{\T} }{\| \widehat M_{hh'} \|_2^2 } - \frac{\widehat M_{hh'} \widehat M_{hh'}^{\T} }{\| M_{hh'} \|_2^2 } + \frac{\widehat M_{hh'} \widehat M_{hh'}^{\T} }{\| M_{hh'} \|_2^2 } - \frac{ M_{hh'} M_{hh'}^{\T} }{\| M_{hh'} \|_2^2} \|_2\\
&~ \le \| \frac{\widehat M_{hh'} \widehat M_{hh'}^{\T} }{\| \widehat M_{hh'} \|_2^2 } - \frac{\widehat M_{hh'} \widehat M_{hh'}^{\T} }{\| M_{hh'} \|_2^2 } \|_2 + \| \frac{\widehat M_{hh'} \widehat M_{hh'}^{\T} }{\| M_{hh'} \|_2^2 } - \frac{ M_{hh'} M_{hh'}^{\T} }{\| M_{hh'} \|_2^2} \|_2\\
&~ \le \| \widehat M_{hh'} \widehat M_{hh'}^{\T} \|_2 | \frac{\|M_{hh'}\|_2^2 - \|\widehat M_{hh'}\|_2^2 }{\|\widehat M_{hh'}\|_2^2 \|M_{hh'}\|_2^2 } | + \frac{1}{\|M_{hh'}\|_2^2} \{ \| (\widehat M_{hh'} - M_{hh'})\widehat M_{hh'}^{\T} \|_2 + \| M_{hh'} (\widehat M_{hh'} - M_{hh'})^{\T} \|_2 \} \\
&~ \le \frac{\| \widehat M_{hh'} \widehat M_{hh'}^{\T} \|_2 (\|M_{hh'}\|_2 + \| \widehat M_{hh'} \|_2 )}{\|\widehat M_{hh'}\|_2^2 \|M_{hh'}\|_2^2 } \|M_{hh'} - \widehat M_{hh'} \|_2  + \frac{\|M_{hh'}\|_2 + \| \widehat M_{hh'} \|_2 }{\|M_{hh'}\|_2^2} \|M_{hh'} - \widehat M_{hh'} \|_2% \\
%&~ = O_p(l^{-1/2}),
.
\end{split}
\]
Condition (1) implies that $|\hat\bm_h\|_2 = \| 1/l \sum_{1=1}^{l} \bX_{hi} \|_2 \le 1/l \sum_{1=1}^{l} \| \bX_{hi} \|_2 = O_p(p)$. Then, $\| \widehat M_{hh'} \|_2 = \| \hat\bm_h - \hat\bm_{h'} \|_2 \le \| \hat\bm_h \|_2 + \| \hat\bm_{h'} \|_2 = O_p(p)$ and $\| M_{hh'} \|_2 = \|M_{hh'} - \widehat M_{hh'} + \widehat M_{hh'} \|_2 \le \|M_{hh'} - \widehat M_{hh'} \|_2 + \|\widehat M_{hh'} \|_2 = O_p(p^{1/2}l^{-1/2}) + O_p(p) = O_p(p)$. Thus, $ \frac{\|M_{hh'}\|_2 + \| \widehat M_{hh'} \|_2 }{\|M_{hh'}\|_2^2} = O_p(p^{-1})$. In view of $\| \widehat M_{hh'} \widehat M_{hh'}^{\T} \|_2 /\|\widehat M_{hh'}\|_2^2 = O_p(1)$, we obtain
\[
\|A_3\|_2 = O_p(p^{-1}) \cdot O_p(p^{1/2} l^{-1/2}) = O_p(p^{-1/2}H^{1/2}n^{-1/2}).
\]

%where the last equality is easily derived by noting the first part of Condition (1) and $\frac{\|M_{hh'}\|_2 + \| \widehat M_{hh'} \|_2 }{\|M_{hh'}\|_2^2} = O_p(p^{-1/2})$.

%Therefore, %$\|\hat\bm_h - \bm_h\|_2 = O_p(p^{1/2}l^{-1/2})$.
%$\|A_3\|_2 = O_p(l^{-1/2}) = O_p(n^{-1/2}H^{1/2})$.

\end{proof}

\begin{remark}
Although \cite{HanLiu2016} required $\bX \sim EC_p(\bmu, \bSigma, \xi)$ in their Theorem 3.1, one can easily find that it is not a necessary condition by reviewing their proof, which means in our paper that we don't need any restriction on the distribution of $\bm(Y)$ for the consistency. That's a quite good property, because it is not trivial to test the distribution of $\bm(Y)$.
\end{remark}

\begin{remark} \label{r2}
The convergence rate for the second part does not seem quite apparent. In fact, there is an underling assumption. That is, we believe that the weights plugged into (\ref{2.1}) would not change the decreasing rate of (\ref{7.5}), which is a power of $H$. This speculation is tenable, because the weights take effects as the reciprocal of the $L_2$ distance. This assumption could be further demonstrated by the following finding:
\[
\begin{split}
~& \| \frac{\bm_h-\bm_{h'}}{\|\bm_h-\bm_{h'}\|_2} - \frac{\bm_i-\bm_{i'}}{\|\bm_i-\bm_{i'}\|_2} \|_2 \\
~& := \| \frac{\ba}{\|\ba\|_2} - \frac{\bb}{\|\bb\|_2} \|_2 \\
~& = \| \frac{\ba}{\|\ba\|_2} -\frac{\bb}{\|\ba\|_2} + \frac{\bb}{\|\ba\|_2}- \frac{\bb}{\|\bb\|_2} \|_2 \\
~& \le \| \frac{\ba}{\|\ba\|_2} -\frac{\bb}{\|\ba\|_2} \|_2 + \|  \frac{\bb}{\|\ba\|_2}- \frac{\bb}{\|\bb\|_2} \|_2 \\
~& \le \frac{\|\ba-\bb\|_2}{\|\ba\|_2} + \|\bb\|_2 \cdot \frac{| \|\bb\|_2-\|\ba\|_2}{\|\ba\|_2\|\bb\|_2}\\
~& \le 2 \cdot \frac{\|\ba-\bb\|_2}{\|\ba\|_2}\\
~& \le 2 \cdot \frac{\|\bm_h-\bm_{h'} - (\bm_i-\bm_{i'})\|_2}{\|\bm_h-\bm_{h'}\|_2}.
\end{split}
\]
\end{remark}

%\begin{remark}
%In the third part of the proof of Theorem \ref{t2}, we take $\|\bepsilonb_h - \bepsilonb_{h'}\|_2 / \|\bm_h - \bm_{h'}\|_2 = O_p(1)$ for granted. The reason is that $\bepsilonb_h = \hat\bm_h - \bm_h$ acts just like the residual term in the inverse regression.
%\end{remark}

\subsection{Proof of Corollary \ref{c2}}

\begin{proof}
\[
\begin{split}
&~ \|\widehat\bM^{-1} \widehat\bM_{\bm} - \bM^{-1} \bM_{\mathbb{E}(\bX|Y)} \|_2 \\
&~ = \|\widehat\bM^{-1} \widehat\bM_{\bm} - \bM^{-1} \widehat\bM_{\bm} + \bM^{-1} \widehat\bM_{\bm} - \bM^{-1} \bM_{\mathbb{E}(\bX|Y)} \|_2 \\
&~ \le \|\widehat\bM^{-1} \widehat\bM_{\bm} - \bM^{-1} \widehat\bM_{\bm} \|_2 + \|\bM^{-1} \widehat\bM_{\bm} - \bM^{-1} \bM_{\mathbb{E}(\bX|Y)} \|_2 \\
&~ \le \| \widehat\bM^{-1}- \bM^{-1}\|_2 \| \widehat\bM_{\bm} \|_2 + \| \bM^{-1} \|_2 \| \widehat\bM_{\bm} - \bM_{\mathbb{E}(\bX|Y)} \|_2.
\end{split}
\]

From Theorem 3.2 of \cite{HanLiu2016} in combination with that the eigenvalues of $\bM$ are bounded away from zero and infinity, we have $\|\widehat\bM^{-1}- \bM^{-1}\|_2 = \| \bM^{-1} (\bM - \widehat \bM) \widehat\bM^{-1} \|_2 = O_p((p\log p)^{1/2}n^{-1/2})$, where the last equality can be demonstrated by
\[
\begin{split}
\| \widehat\bM^{-1} \|_2
~& = (\lambda_{\max}(\widehat\bM^{-1} \widehat\bM^{-1}))^{1/2}\\
~& = \lambda_{\max}(\widehat\bM^{-1}) = \lambda_{\min}(\widehat\bM) \\
~& = \min_{\|\balpha \|_2=1} \balpha^{\T} \widehat\bM \balpha \\
~& = \min_{\|\balpha \|_2=1} \balpha^{\T} (\widehat\bM-\bM+\bM) \balpha \\
~& \ge \min_{\|\balpha \|_2=1} \balpha^{\T}(\widehat\bM-\bM) \balpha + \min_{\|\balpha \|_2=1} \balpha^{\T}\bM\balpha \\
~& \ge \| \widehat\bM-\bM \|_2 + \lambda_{\min}(\bM) \\
~& = o_p(1) + \lambda_{\min}(\bM) \\
~& = O_p(1).
\end{split}
\]
Moreover, in view of Condition (2) in Theorem \ref{t2}, we can easily get $\| \widehat\bM_{\bm} \|_2 = O_p(1)$. Together with the result of Theorem \ref{t2} about $\| \widehat\bM_{\bm} - \bM_{\mathbb{E}(\bX|Y)} \|_2$, we complete the proof.

\end{proof}


\begin{thebibliography}{11}
\newcommand{\enquote}[1]{``#1''}
\expandafter\ifx\csname
natexlab\endcsname\relax\def\natexlab#1{#1}\fi

\bibitem[Bura and Cook(2001a)]{BuraCook2001a}
Bura, E. and Cook, R. D. (2001a). Extending sliced inverse regression: the weighted Chi-squared test. \JASA, {\bf 96}, 996--1003.

\bibitem[Bura and Cook(2001b)]{BuraCook2001b}
Bura, E. and Cook, R. D. (2001b). Estimating the structural dimension of regressions via parametric inverse regression. \JRSSB, {\bf 63}, 393--410.

\bibitem[Bura and Forzani(2015)]{BuraForzani2015}
Bura, E. and Forzani, L. (2015). Sufficient reductions in regressions with elliptically contoured inverse predictors. \JASA, {\bf 110}, 420--434.

\bibitem[Bura, Duarte and Forzani(2016)]{Bura2016}
Bura, E., Duarte, S. and Forzani, L. (2016). Sufficient reductions in regressions with exponential family inverse predictors. \JASA, {\bf 111}, 1313--1329.

\bibitem[Chen, Cook and Zou(2015)]{ChenCookZou2015}
Chen, X., Cook, R. D. and Zou, C. L. (2015). Diagnostic studies in sufficient dimension reduction. \BKA, {\bf 102}, 545--558.

\bibitem[Chen, Zou and Cook(2010)]{ChenZouCook2010}
Chen, X., Zou, C. L. and Cook, R. D. (2010). Coordinate-independent sparse sufficient dimension reduction and variable selection. \AS, {\bf 38}, 3696--3723.

\bibitem[Cook(1994)]{Cook1994}
Cook, R. D. (1994). On the interpretation of regression plots. \JASA, {\bf 89}, 177--189.

\bibitem[Cook(1996)]{Cook1996}
Cook, R. D. (1996). Graphics for regressions with a binary response. \JASA, {\bf 91}, 983--992.
%%Book
\bibitem[Cook(1998)]{Cook1998}
Cook, R. D. (1998). {\it Regression Graphics: Ideas for Studying Regressions through Graphics}. Wiley, New York.

\bibitem[Cook(2000)]{Cook2000}
Cook, R. D. (2000). SAVE: a method for dimension reduction and graphics in regression. {\sl Communications in Statistics, Part A, Theory and Methods}, {\bf 29}, 2109--2121.

\bibitem[Cook(2007)]{Cook2007}
Cook, R. D. (2007). Fisher Lecture: Dimension deduction in regression (with Discussion). {\sl Statistical Science}, {\bf 22}, 1--26.

\bibitem[Cook and Forzani(2009)]{CookForzani2009}
Cook, R. D. and Forzani, L. (2009). Likelihood-based sufficient dimension reduction. \JASA, {\bf 104}, 197--208.

\bibitem[Cook and Ni(2005)]{CookNi2005}
Cook, R. D. and Ni, L. (2005). Sufficient dimension reduction via inverse regression: A minimum discrepancy approach. \JASA, {\bf 100}, 410--428.

\bibitem[Cook and Weisberg(1991)]{CookWeisberg1991}
Cook, R. D. and Weisberg, S. (1991). Comment on ¡°sliced inverse regression for dimension reduction¡± by K. C. Li. \JASA, {\bf 86}, 328--332.

%%Book
\bibitem[Croux, Ollila and Oja(2002)]{Crous2002}
Croux, C., Ollila, E., and Oja, H. (2002). Sign and rank covariance matrices: statistical properties and
application to principal components analysis. In {\sl Statistics for Industry and Technology}, pages 257--69. Birkhauser.

\bibitem[Fan, Liu and Wang(2015)]{Fan2015}
Fan, J., Liu, H. and Wang, W. (2015). Large covariance estimation through elliptical factor models. arXiv preprint arXiv:1507.08377.

%%DOI
\bibitem[Han and Liu(2016)]{HanLiu2016}
Han, F. and Liu, H. (2016). ECA: high dimensional elliptical component analysis in non-Gaussian distributions. \JASA, DOI:
10.1080/01621459.2016.1246366.

\bibitem[Hsing and Carroll(1992)]{HsingCarroll1992}
Hsing, T. and Carroll, R. J. (1992). An asymptotic theory for sliced inverse regression. \AS, {\bf 20}, 1040--1061.

\bibitem[Hult and Lindskog(2002)]{HultLindskog2002}
Hult, H., and Lindskog, F. (2002). Multivariate extremes, aggregation and dependence in elliptical distributions. {\sl Advances in Applied probability}, {\bf 34}, 587--608.

\bibitem[Jing, Shao and Wang(2003)]{JingShaoWang2003}
Jing, B. Y., Shao, Q. M. and Wang, Q. (2003). Self-normalized cramer-type large deviations for independent random variables. \AP, {\bf 31}, 2167--2215.

\bibitem[Johnstone and Lu(2009)]{JohnstoneLu2009}
Johnstone, I. and Lu, A. (2009). On consistency and sparsity for principal components analysis in high dimensions. \JASA, {\bf 104}, 682--693.

\bibitem[Li and Wang(2007)]{LiWang2007}
Li, B. and Wang, S. L. (2007). On directional regression for dimension reduction. \JASA, {\bf 102}, 997--1008.

\bibitem[Li, Zha and Chiaromonte(2005)]{Li2005}
Li, B., Zha, H. and Chiaromonte, F. (2005). Contour regression: A general approach to dimension reduction. \AS, {\bf 33}, 1580--1616.

\bibitem[Li(1991)]{Li1991}
Li, K. C. (1991). Sliced inverse regression for dimension reduction. \JASA, {\bf 86}, 316--342.

\bibitem[Li(1992)]{Li1992}
Li, K. C. (1992). On principal Hessian directions for data visualization and dimension reduction: Another application
of Stein¡¯s lemma. \JASA, {\bf 87}, 1025¨C-1039.

\bibitem[Li(2007)]{Li2007}
Li, L. (2007). Sparse sufficient dimension reduction. \BKA, {\bf 94}, 603--613.

%%accepted
\bibitem[Lin, Zhao and Liu(2017)]{Lin2017}
Lin Q., Zhao Z. and Liu J. S. On consistency and sparsity of sliced inverse regression in high dimensions, \AS (accepted), arXiv:1507.03895.

\bibitem[Marden(1999)]{Marden1999}
Marden, J. (1999). Some robust estimates of principal components. {\sl Statistics and Probability Letters}, {\bf 43(4)}, 349--359.

\bibitem[Ma and Zhu(2012)]{MaZhu2012}
Ma, Y. and Zhu, L. (2012). A semiparametric approach to dimension reduction. \JASA, {\bf 107}, 168--179.

%%book
\bibitem[Oja(2010)]{Oja2010}
Oja, H. (2010). {\it Multivariate Nonparametric Methods with R: An Approach based on Spatial Signs and Ranks}, volume 199. Springer, New York.

\bibitem[Sheng and Yin(2013)]{ShengYin2013}
Sheng, W. and Yin, X. (2013). Direction estimation in single-index models via distance covariance. {\sl Journal of Multivariate Analysis}, {\bf 122},
148--161.

\bibitem[Sheng and Yin(2016)]{ShengYin2016}
Sheng, W. and Yin, X. (2016). Sufficient dimension reduction via distance covariance. {\sl Journal of Computational and Graphical Statistics} {\bf 25}, 91--104.

\bibitem[Xia et al.,(2002)]{Xia2002}
Xia, Y., Tong, H., Li, W. K. and Zhu, L. (2002). An adaptive estimation of dimension reduction space (with Discussion). \JRSSB, {\bf 64}, 363--410.

\bibitem[Yang(1977)]{Yang1977}
Yang, S. S. (1977). General distribution theory of the concomitants of order statistics. \AS, {\bf 5}, 996--1002.


\bibitem[Zhu and Fang(1996)]{ZhuFang1996}
Zhu, L. X. and Fang, K. T. (1996). Asymptotics for kernel estimate of sliced inverse regression. \AS, {\bf 24}, 1055--1068.

\bibitem[Zhu and Ng(1995)]{ZhuNg1995}
Zhu, L. X. and Ng, K. W. (1995). Asymptotics of sliced inverse regression. {\sl Statistica Sinica}, {\bf 5}, 727--736.

\bibitem[Zhu, Miao and Peng(2006)]{ZhuMiaoPeng2006}
Zhu, L. X., Miao, B. Q. and Peng, H. (2006). On sliced inverse regression with high-dimensional covariates. \JASA, {\bf 101}, 630--643.
%%Technical report
\bibitem[Zhu, Ohtaki and Li(2006)]{ZhuOhtakiLi2006}
Zhu, L. X., Ohtaki, M. and Li, Y. X. (2006). Extensions of sliced inverse regression¨Cbased algorithms. Technical report, University of Hong Kong, Dept. of Statistics and Actuarial Science.

\end{thebibliography}
\end{document}